\documentclass[review]{elsarticle}

\usepackage{hyperref}
\usepackage{amssymb,amsmath,amsthm}
\usepackage{epsfig}
\usepackage{epstopdf}
\usepackage{subfigure}
\usepackage{multirow}
\usepackage{lscape} 
\usepackage{multirow}
\usepackage{booktabs}

\usepackage{color}

\newtheorem{proposition}{Proposition}[section]

\newtheorem{definition}{Definition}[section]
\newtheorem{example}{Example}[section]

\journal{Journal of \LaTeX\ Templates}









\bibliographystyle{elsarticle-harv}\biboptions{authoryear}

\begin{document}

\begin{frontmatter}

\title{Discriminant Analysis of Distributional Data via Fractional Programming}


\author[dias]{S\'onia Dias \corref{mycorrespondingauthor}}
\cortext[mycorrespondingauthor]{Corresponding author}
\ead{sdias@estg.ipvc.pt}
\author[bri]{Paula Brito}
\ead{mpbrito@fep.up.pt}
\author[ama]{Paula Amaral}
\ead{paca@fct.unl.pt}
\address[dias]{Instituto Polit\'{e}cnico de Viana do Castelo \& LIAAD-INESC TEC, Portugal}
\address[bri]{Faculdade de Economia,  Universidade do Porto \& LIAAD-INESC TEC, Portugal}
\address[ama]{CMA \& Faculty of Science and Engineering, Universidade Nova de Lisboa, Portugal}

\begin{abstract}
We address classification of distributional data, where units are described by histogram or interval-valued variables. The proposed approach uses a linear discriminant function  where distributions or intervals are represented by quantile functions, under specific assumptions. This discriminant function allows defining a score for each unit, in the form of a quantile function, which is used to classify the units in two a priori groups, using the Mallows distance. There is a diversity of application areas for the proposed linear discriminant method. In this work we classify the airline companies operating in NY airports based on air time and arrival/departure delays, using a full year flights.
\end{abstract}

\begin{keyword}
Data science \sep Multivariate statistics \sep Classification \sep Symbolic Data Analysis \sep Histogram data
\end{keyword}

\end{frontmatter}

\section{Introduction}
\label{s1}

The type of data that is necessary to analyse has changed over the past few years. The evolution of technology allows recording large sets of data, it is therefore necessary to develop efficient and more precise methods to analyse such big data. The large size of  datasets often leads to the need to aggregate units and build macrodata. As a consequence, data analysts are confronted with data where the observations are not single values or categories but finite sets of values/categories, intervals or distributions. It is therefore necessary to develop new methods that allow analysing these more complex datasets. Symbolic Data Analysis \citep{bodi00,bidi07,noirbri11,br14} is a recent statistical framework that studies and develops methods for such kind of symbolic variables. As in classical statistics, these new variables can be classified as quantitative or qualitative.  Depending on the type of realization, quantitative variables may be single-valued  - when each unit is allowed to take just one single value, multi-valued - when each unit is allowed to take a finite set of values, interval-valued - when an interval of real values is recorded for each unit or modal-valued - when each unit is described by a probability/frequency/weight distribution. Histogram-valued variables, that we are considering in this work, are a particular type of modal-valued variables \citep{bodi00,bidi07,noirbri11,br14}; we note that interval-valued variables are a special case of these, so that the developed models and methods also apply to interval data.

\begin{example} \label{ex1}
Consider data about the Air Time and Arrival Delay of all flights of five airlines departing from a given airport. Here, the entities of interest are not the individual flights but the airlines, for each of which we aggregate information. The values of variables Air Time and Arrival Delay may be aggregated in the form of interval or histogram-valued variables; in the first case each airline is represented by an interval, defined by the minimum and maximum observed values; for histogram-valued variables,  each airline is represented by the empirical distributions of the records associated with each variable.
Table \ref{table_AirTimeArrive } presents the obtained symbolic data, where Arrival Delay is an interval-valued variable and Air Time is a histogram-valued variable.
\end{example}
Linear models are frequently used in multivariate data analysis, as in linear regression and linear discriminant analysis. However, when the observations are not single values but intervals or distributions, the classical definition of linear combination has to be adapted. According to the definition proposed by \citet{dibr15}, the observations of the variables that are distributions or intervals of values are rewritten as quantile functions, the inverse of cumulative distribution functions \citep{irve15}, thereby taking into account the distribution within the intervals or subintervals that compose the histograms.
\begin{center}
	\begin{table}
\caption{Arrival Delay and Air Time interval and histogram-valued variables, respectively.}
{\scriptsize
\begin{tabular}{|c|c|c|}
\hline
Airline  & \multirow{2}{*}{Arrival Delay}  & \multirow{2}{*}{Air Time } \\
(IATA code) && \\
\hline
9E & $[-68,744]$ & $\{[21,56.5[,0.3;[56.5,106[,0.4;[106,196[,0.2;[196,272[,0.1\}$\\
EV & $[-62,577]$ & $\{[20,49[,0.2;[49,76[,0.2;[76,97[,0.2;[97,124[,0.2;[124,286],0.2\}$\\
MQ & $[-53,1127]$ & $\{[33,68[,0.2;[68,77[,0.2;[77,105[,0.3;[105,236],0.3\}$\\
OO & $[-26,157]$ & $\{[50,68[,0.4;[68,70[,0.2;[70,177[,0.4\}$\\
YV &  $[-46,381]$ & $\{[32,47[,0.2;[47,51[,0.2;[51,77[,0.2;[77,85[,0.2;[85,122],0.2\}$\\
\hline
\end{tabular}}
\label{table_AirTimeArrive }
\end{table}
\end{center}
 In the case of histogram-valued variables, the Uniform distribution is usually assumed; for interval-valued variables different distributions may be assumed \citep{dibr17}, up to this date  Uniform and Triangular distributions have been considered in linear regression models for this type of variables \citep{dibr17, tesemalaquias17}. The criterion to be optimized to define linear models, for both variable types, is based on the Mallows distance. Using the linear combination proposed in \citet{dibr15} we define, in this work, a linear discriminant function which allows defining for each unit a score that is a quantile function. The units are then classified based on the distance between its score and that of the barycenter of each a priori classes.

For interval-valued variables, parametric classification rules, based on Normal or Skew-Normal distributions, were proposed in \citet{sibri15}. 
Non-parametric methodologies for discriminant analysis of interval data may be found in, e.g. \cite{Ishibuchi90}, \cite{Nivlet01},  \cite{Rossi02},  \cite{DSBr06}, \cite{Carrizosa07}, \cite{Angulo07}, \cite{Lauro08},  \cite{Utkin11}.

The remaining of the paper is organized as follows. Section 2 introduces  histogram and interval-valued variables and their representations, the distance used to evaluate the similarity between histogram/intervals and the definition of linear combination for these types of variables. Section 3 presents the discriminant function and the optimization problem that allows obtaining the model parameters. Section 4 reports a simulation study and discusses its results. In Section 5, an application to flights data is presented. Finally, Section 6 concludes the paper, pointing out directions for future research.

\section{Concepts about histogram-valued variables} \label{s2}

In this section we introduce definitions and recall results needed to support the symbolic linear discriminant analysis that will be proposed and that allow for the classification of a set of units in two classes.

\subsection{Histogram-valued variables and their representations}\label{s2.1}

Histogram-valued variables \citep{bidi07, noirbri11}, are formally defined as follows.

\begin{definition}\label{def2.1}
     $Y$ is a histogram-valued variable when to each unit $i \in \{1,\ldots,n\}$ corresponds a histogram $Y(i)$ defined by a finite number of contiguous and non-overlapping intervals, each of which is associated with a (non-negative) weight. Then, $Y(i)$ can be represented by a histogram (\cite{bidi03}):
\begin{equation}\label{eqHinterval}
H_{Y(i)}=\left\{\left[\underline{I}_{Y(i)_1},\overline{I}_{Y(i)_1}\right[,p_{i1}; \left[\underline{I}_{Y(i)_2},\overline{I}_{Y(i)_2}\right[,p_{i2};\ldots;
\left[\underline{I}_{Y(i){m_{i}}},\overline{I}_{Y(i){m_{i}}}\right],p_{im_{i}}\right\}
\end{equation}
 $p_{i\ell}$ is the probability or frequency associated with the subinterval $\left[\underline{I}_{Y(i)_{\ell}},\overline{I}_{Y(i)_{\ell}}\right[,$ $\ell \in \left\{1,2,\ldots,m_{i}\right\},$ $m_{i}$  is the number of subintervals for   unit $i$; $\displaystyle\sum\limits_{\ell=1}^{m_{i}} p_{i\ell}=1;$ $\underline{I}_{Y(i)_\ell} \leq \overline{I}_{Y(i)_{\ell}}$ for $\ell \in \left\{1,2,\ldots,m_{i}\right\},$ and $\overline{I}_{Y(i)_{\ell-1}}\leq \underline{I}_{Y(i)_{\ell}},$  $\ell \in \left\{2,\ldots,m_{i}\right\}.$ 

\end{definition}

Each subinterval $I_{Y(i)_\ell}$ may be represented by its lower and upper bounds $\underline{I}_{Y(i)_\ell}$ and $\overline{I}_{Y(i)_\ell}$, or by its centre $c_{Y(i)_\ell}=\frac{\overline{I}_{Y(i)_\ell}+\underline{I}_{Y(i)_\ell}}{2}$ and half-range $r_{Y(i)_\ell}=\frac{\overline{I}_{Y(i)_\ell}-\underline{I}_{Y(i)_\ell}}{2}$.
$Y(i)$ may, alternatively, be represented by the inverse cumulative distribution function, the quantile function $\Psi_{Y(i)}^{-1}$, under specific assumptions \citep{irve06, dibr15}. 

Henceforth, in all representations, it is assumed that within each subinterval $\left[\underline{I}_{Y(i)_{\ell}},\overline{I}_{Y(i)_{\ell}}\right[$  the values for the variable $Y,$ for unit $i$, are uniformly distributed.
In this case, the quantile function is piecewise linear and is given by
\begin{equation}\label{eqHFQ}
\Psi_{Y(i)}^{-1}(t)=\left\{{\renewcommand{\arraystretch}{1.25}\begin{array}{lll}
                   c_{Y(i)_1}+\left(\frac{2t}{w_{i1}}-1\right) r_{Y(i)_1} & \textit{ if } & 0 \leq t < w_{i1} \\
                   c_{Y(i)_2} +\left(\frac{2(t-w_{i1})}{w_{i2}-w_{i1}}-1\right) r_{Y(i)_2} & \textit{ if } & w_{i1} \leq t < w_{i2} \\
                     \vdots & & \\
                    c_{Y(i)_{m_i}} +\left(\frac{2(t-w_{i(m_i-1)})}{1-w_{i(m_i-1)}}-1\right) r_{Y(i)_{im_i}} & \textit{ if } & w_{i(m_i-1)} \leq t \leq
                   1
 \end{array}}
  \right.
\end{equation}

\noindent where $w_{i\ell}=\displaystyle \sum_{h=1}^{\ell} p_{ih}$,  $\ell \in\{1,\ldots,m_{i}\}$, and $m_{i}$ is the number of subintervals in $Y(i).$

In the previous definition:
\begin{itemize}
\item The subintervals of histograms $H_{Y(i)}$ should be ordered and disjoint, if not they must be rewritten in that form (see \cite{tesedias14}, Appendix A).
\item  For different units, the number of subintervals, or pieces in the quantile functions, may be different. If necessary,  this function may be rewritten with the same number of pieces and the same domain for each piece for all units (see \cite{tesedias14}, Section 2.2.3).
\end{itemize}

When $m_i=1$ for each unit $i,$ $Y(i)$ is the interval $\left[\underline{I}_{Y(i)},\overline{I}_{Y(i)}\right[$  with $p_{i}=1;$ the histogram-valued variable is then reduced to an interval-valued variable. The corresponding quantile function is:
$\Psi_{Y(i)}^{-1}(t)=
 c_{Y(i)}+ \left(2t-1\right)r_{Y(i)}$,  $0 \leq t \leq 1$, where  $c_{Y(i)}= \frac{(\underline{I}_{Y(i)}+\overline{I}_{Y(i)})}{2}$ and $r_{Y(i)}= \frac{(\overline{I}_{Y(i)} -\underline{I}_{Y(i)})}{2}$.

\begin{example} \label{ex2}

Consider the distribution of variable Air Time for the airline 9E in Example \ref{ex1}:  $H_{9E}=\left\{[21,56.5[,0.3;[56.5,106[,0.4;[106,196[,0.2;[196,272[,0.1\right\}.$
This histogram may also be represented by the quantile function (see \textit{Figure \ref{figHist1}}):

\small{
$$
\Psi_{9E}^{-1}(t)=\left\{\begin{array}{lll}
                     38.75+\frac{t}{0.3}\times 18.75  & \quad if \quad  & 0 \leq t <0.3 \\
                    81.25 +\frac{t-0.3}{0.4} \times 25.75 &  \quad if \quad & 0.3 \leq t < 0.7 \\
                    151 +\frac{t-0.7}{0.2} \times 45 &  \quad if \quad & 0.7 \leq t < 0.9 \\
                   234 +\frac{t-0.9}{0.1} \times 38 & \quad if \quad & 0.9 \leq t \leq 1
 \end{array}
  \right.
  $$}

\begin{figure}[h!] 
  \centering
   \includegraphics[width=0.5\textwidth]{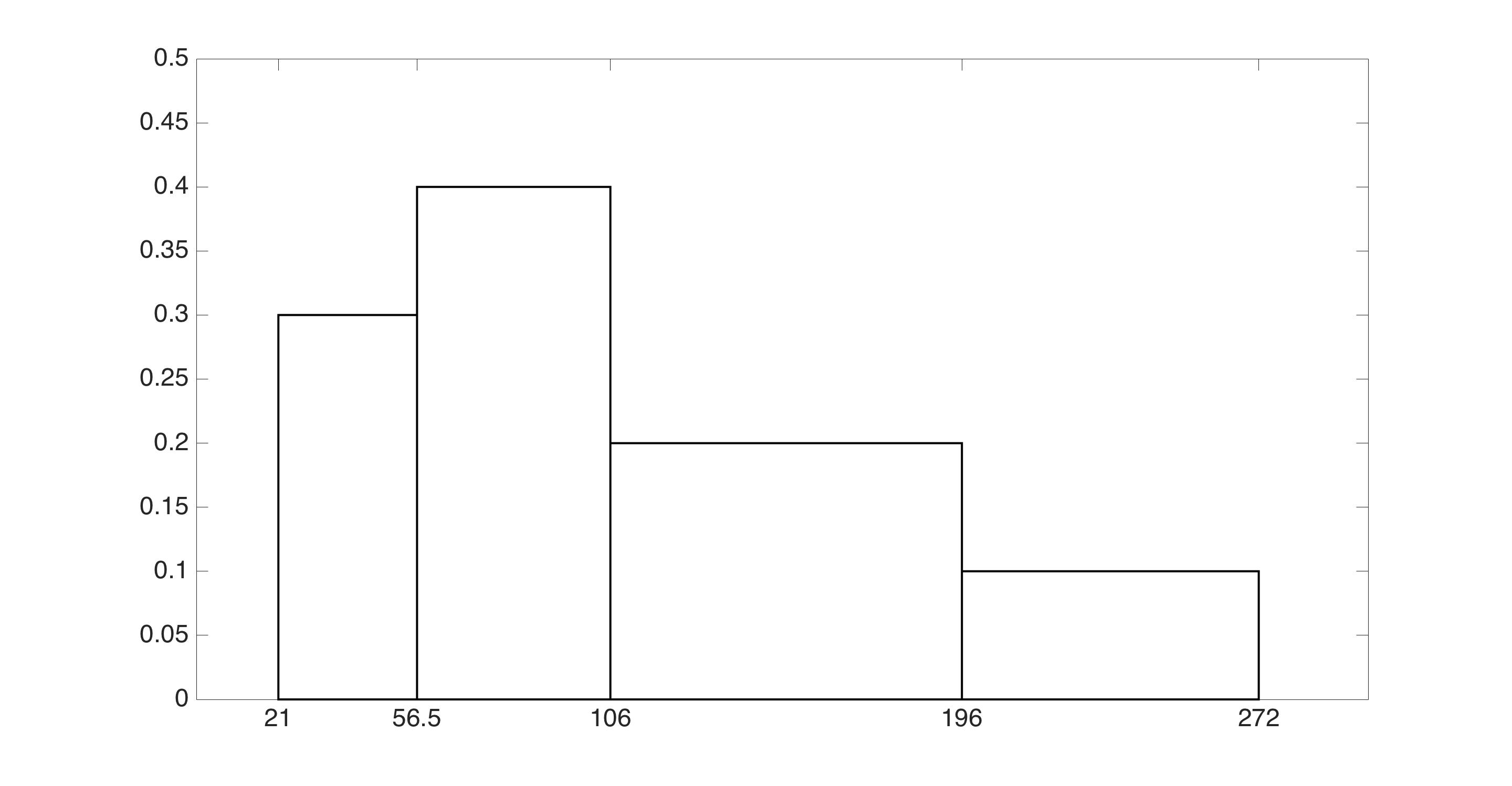}~
    \includegraphics[width=0.5\textwidth]{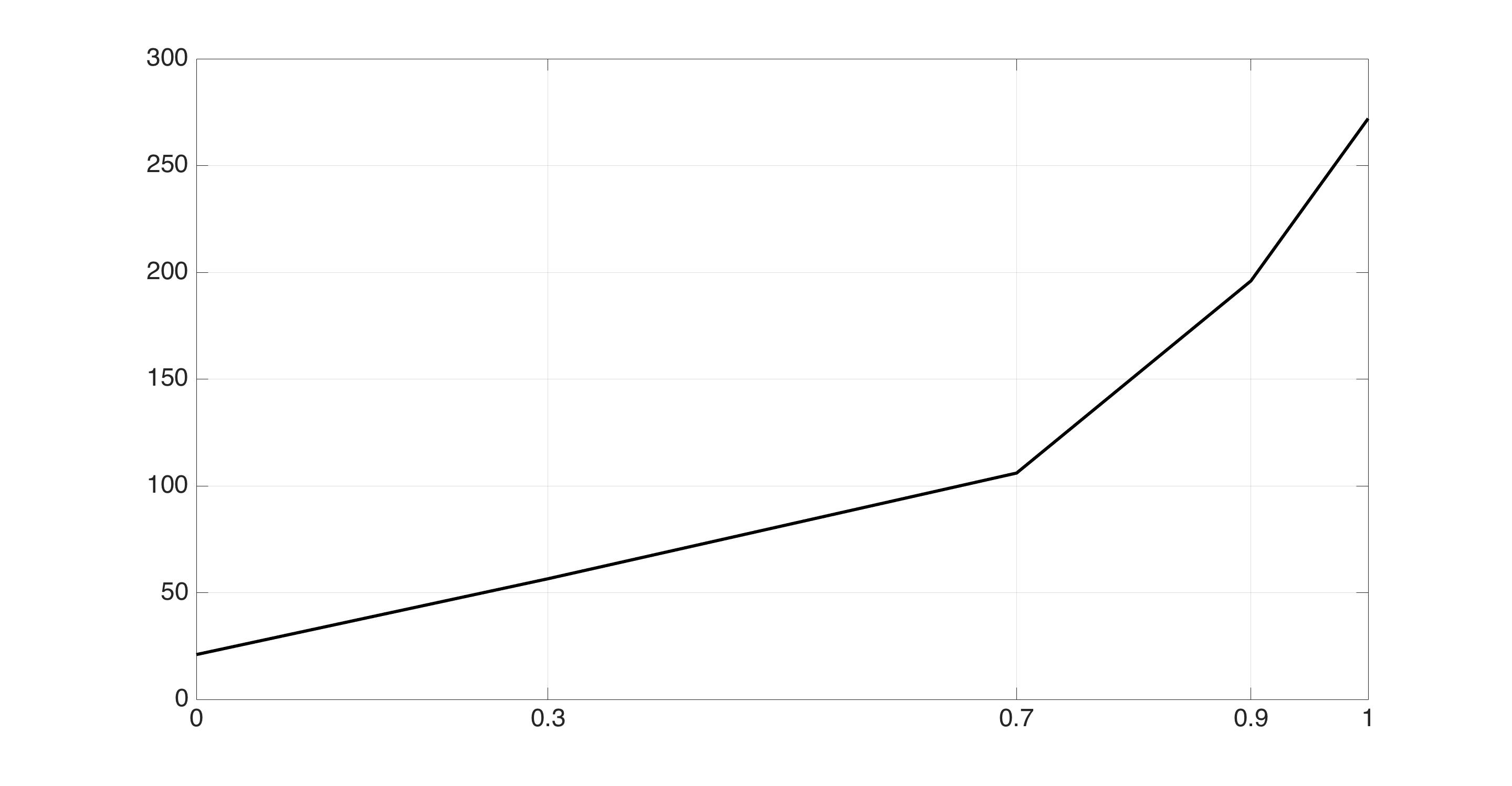}
   \caption{Histogram and respective quantile function of  $H_{9E}$  in Example \ref{ex1}.}\label{figHist1}
\end{figure}

For the interval-valued variable Arrival Delay for airline 9E, $I_{9E}=[-68,744],$ the quantile function is the linear function 
$\Psi_{9E}^{-1}(t)= 338+(2t-1)406, \, \, t\in[0,1]$,
represented in Figure \ref{figInt1}.

\begin{figure}[h!] 
  \centering
   \includegraphics[width=0.5\textwidth]{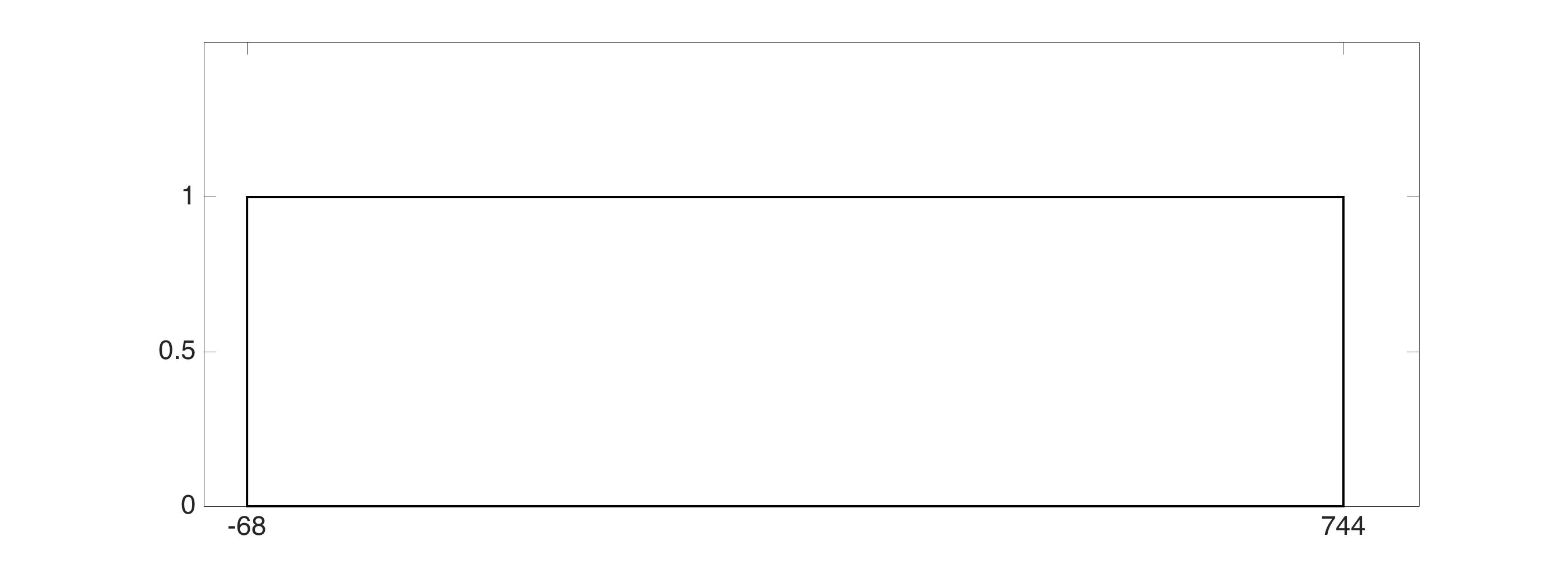}~
    \includegraphics[width=0.5\textwidth]{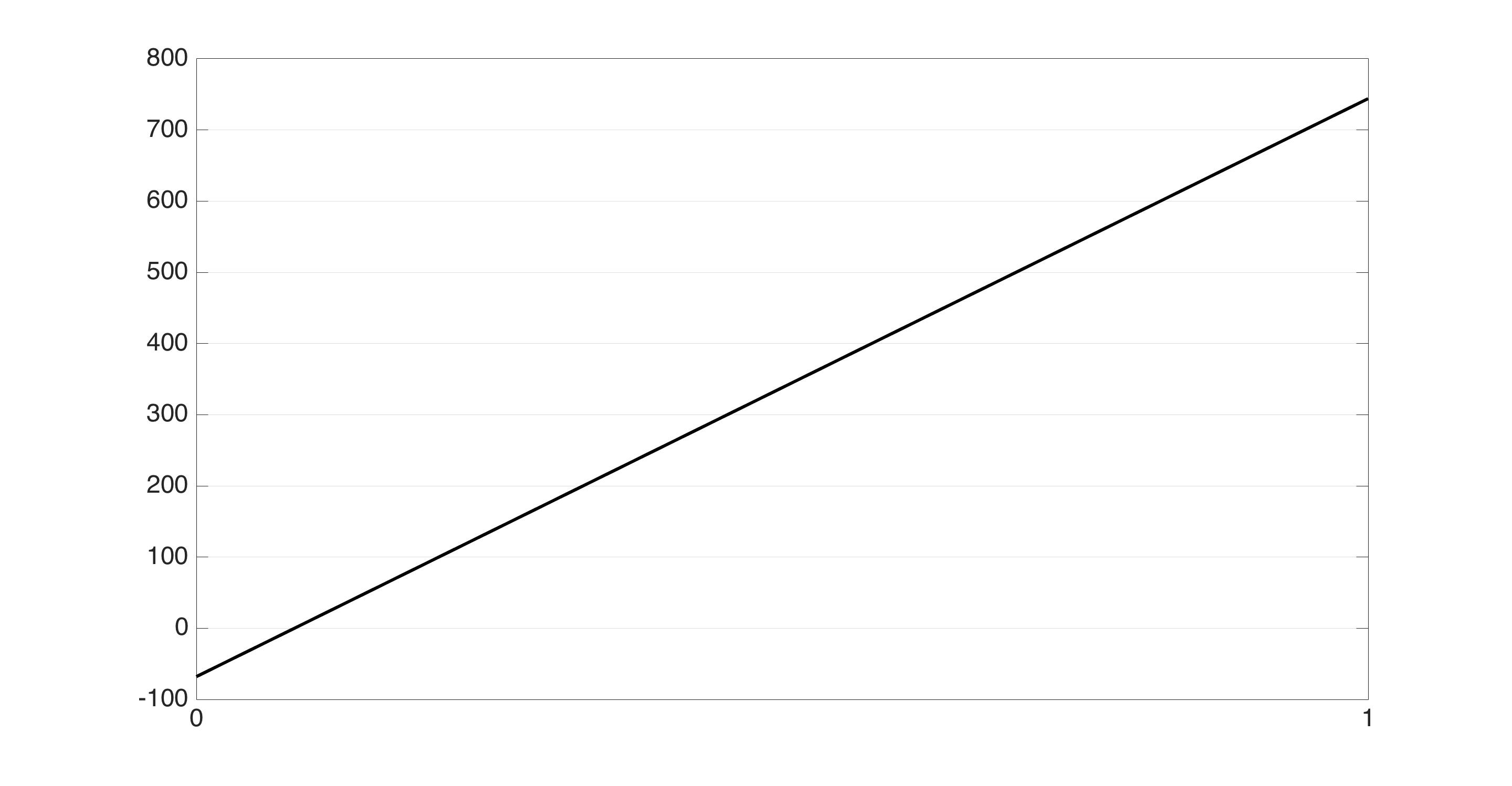}
   \caption{Interval and quantile function of $I_{9E}$  in Example \ref{ex1}.}\label{figInt1}
\end{figure}

\end{example}

Since empirical quantile functions are the inverse of cumulative distribution functions, which  under the uniformity hypothesis are piecewise linear functions in the case of the histograms and continuous linear functions in the case of the intervals, with domain $\left[0,1\right],$ we shall use the usual arithmetic operations with functions. However, when we use quantile functions as the representation of histograms some issues may arise: 

\begin{enumerate}

\item  To operate with the quantile functions it is convenient to define all functions involved with the same number of pieces, and the domain of each piece must be the same for all functions. All histograms must hence have the same number of subintervals and the weight associated with each corresponding subinterval must be equal in all units \citep{irve06};

\item The quantile functions are always non-decreasing functions because all linear pieces represent the subintervals $ [\underline{I}_{X(j)_i} , \overline{I}_{X(j)_i}];$

\item When we multiply a quantile function by a negative real number we obtain a function that is not a non-decreasing function, so it cannot representd a histogram/interval;

\item The quantile function corresponding to the symmetric of the histogram $H_X$ is the quantile function $-\Psi_{X}^{-1}(1-t)$ with $t \in [0,1]$ and not the function obtained by multiplying $\Psi_{X}^{-1}(t)$ by $-1.$ As it is required for quantile functions, $-\Psi_{X}^{-1}(1-t)$ is a non-decreasing function;

\item  $\Psi_{X}^{-1}(t)-\Psi_{X}^{-1}(1-t)$ is not a null function, as might be expected, but a quantile function with null (symbolic) mean \citep{bidi03};

\item The functions $-\Psi_{X}^{-1}(1-t)$ and $\Psi_{X}^{-1}(t)$ are linearly independent, providing that $-\Psi_{X}^{-1}(1-t) \neq \Psi_{X}^{-1}(t)$; only when the histogram $H_{X}$ is symmetric with respect to the $yy-$axis we have $-\Psi_{X}^{-1}(1-t) = \Psi_{X}^{-1}(t).$ 
\end{enumerate}

For more details about the behavior of quantile functions in the representation of histograms, see \citet{tesedias14, dibr15}.

Descriptive statistics for symbolic variables have been proposed by several authors, see \cite{bertrand2000}  for interval-valued variables and  \cite{bidi03,bidi07}, for histogram-valued variables. The definition of symbolic mean, which is needed in the sequel, is as follows.

\begin{definition}\label{defmean}

Let $Y$ be a histogram-valued variable and $H_Y(i)$ the observed histogram for each unit $i \in \left\{1,...,n\right\}$, composed by $m_{i}$ subintervals with weight $p_{i\ell},$, $\ell \in\left\{1,\ldots,m_{i}\right\}$.
The symbolic mean of \textit{histogram-valued variable} $Y$  is given by \citep{bidi03}
$\overline{Y}  =  \displaystyle \frac{1}{n} \sum_{i=1}^{n} \sum_{\ell=1}^{m_{i}} c_{Y(i)\ell} \,\, p_{i\ell}.$
For an interval-valued variable $Y$, the symbolic mean is the arithmetic mean of the centres of the intervals (\cite{bertrand2000}):
$\overline{Y}  =  \displaystyle \frac{1}{n}\sum_{i=1}^{n} c_{Y(i)}.$
\end{definition}

\subsection{Mallows distance}\label{s2.2}

In recent literature the Mallows distance is considered as an adequate measure to evaluate the similarity between distributions. This distance has been successfully used in cluster analysis for histogram data \citep{irve06}, in forecasting histogram time series \citep{arma09}, and in linear regression with histogram/interval-valued variables \citep{arma09,irve15, dibr15,dibr17}.

To compare the distributions taken by the histogram-valued variables using the Mallows distance, they should be represented by their corresponding quantile functions. This distance is defined as follows:

\begin{definition}\label{defDM}
Given two quantile functions $\Psi_{X}^{-1}(t)$ and $\Psi_{Y}^{-1}(t)$ that represent the distributions of the histogram-valued variables $X$ and $Y,$ the Mallows distance   \citep{mall72} is defined as:
\begin{equation}\label{eqDMdef}
D_M(\Psi_{X}^{-1}(t),\Psi_{Y}^{-1}(t))=\sqrt{\int_{0}^{1}(\Psi_{X}^{-1}(t)-\Psi_{Y}^{-1}(t))^2dt}
\end{equation}

Assuming the Uniform distribution within the subintervals and that the quantile functions $\Psi_{X}^{-1}(t)$ and $\Psi_{Y}^{-1}(t)$ are both written with $m$ pieces and the same set of cumulative weights, the squared Mallows distance may be rewritten as \citep{irve06}:  
\begin{equation}\label{eqDM}
D^{2}_M(\Psi_{X}^{-1}(t),\Psi_{Y}^{-1}(t))=\displaystyle\sum_{\ell=1}^{m}p_{\ell}\left[(c_{X_\ell}-c_{Y_{\ell}})^2+\frac{1}{3}(r_{X_\ell}-r_{Y_{\ell}})^2\right]
\end{equation}

\noindent where $c_{X_\ell}, c_{Y_\ell}$ are the centres and $r_{X_\ell}, r_{Y_\ell}$ are the half-ranges of subinterval $\ell$ of variables $X$ and $Y,$ respectively, with $\ell \in \left\{1,2,\ldots,m \right\}.$

\end{definition}

We notice that the weight of the difference between the centres is larger than the weight of the difference between the half-ranges.

Given a set of $n$ units, we may then compute the \textit{Mallows barycentric histogram} or simply \textit{barycentric histogram}, $Y_{b},$  represented by the quantile function $\Psi_{Y_{b}}^{-1}(t)$, as the solution of the minimization problem \citep{irve06}:
\begin{equation}
{\renewcommand{\arraystretch}{1.5}
\begin{array}{ll}
\min &  {\displaystyle \sum_{i=1}^{n}D_{M}^2(\Psi_{Y(i)}^{-1}(t),\Psi_{Y_{b}}^{-1}(t))}.\\
\end{array}}\label{eq1.Barycentric}
 \end{equation}

According to (\ref{eqDM}) we may rewrite the previous problem in the form
\begin{equation}
\begin{array}{ll}
\min &   {\displaystyle f(c_{b1},r_{b1},\ldots,c_{bm},r_{bm})=\displaystyle \sum_{i=1}^{n}\sum_{\ell=1}^{m}p_{i}\left[\left(c_{Y(i)_{\ell}}-c_{Y_{b\ell}}\right)^2+\frac{1}{3}\left(r_{Y(i)_{\ell}}-r_{Y_{b\ell}}\right)^2\right]}.\\
\end{array}
\label{eq2.Barycentric}
 \end{equation}
 The optimal solution is obtained by solving a least squares problem, and is a histogram where the centre and half range of each subinterval $\ell$ is the classical mean, respectively, of the centres and of the half ranges $\ell,$ of all units $i$, and it corresponds to the quantile function:
\begin{equation}\label{eq.BarycentricQF}
\Psi_{Y_{b}}^{-1}(t)=\left\{{\renewcommand{\arraystretch}{1.25}\begin{array}{lll}
                   c_{b1}+\left(\frac{2t}{w_{1}}-1\right) r_{b1} & \textit{ if } & 0 \leq t < w_{1} \\
                    c_{b2} +\left(\frac{2(t-w_{1})}{w_{2}-w_{1}}-1\right) r_{b2} & \textit{ if } & w_{1} \leq t < w_{2} \\
                     \vdots & & \\
                     c_{bm} +\left(\frac{2(t-w_{m-1})}{1-w_{m-1}}-1\right) r_{bm} & \textit{ if } & w_{m-1} \leq t \leq
                   1
 \end{array}}
  \right..
\end{equation}

\begin{equation*}
\text{with  } c_{b\ell}=\frac{1}{n}\displaystyle \sum_{i=1}^{n} c_{Y(i)_{\ell}} \qquad \text{and} \qquad r_{b\ell}=\frac{1}{n}\displaystyle \sum_{i=1}^{n} r_{Y(i)_{\ell}}.
\end{equation*}

%

\begin{proposition}\label{prBarycentricMeanFQ}
	\citep{irve10} The quantile function $\Psi_{Y_b}^{-1}(t),$ that represents the barycentric histogram of $n$ histograms, is the mean of the $n$ quantile functions that represent each observation of the histogram-valued variable $Y,$ i.e.
	$\Psi_{Y_b}^{-1}(t)=\overline{\Psi_{Y}^{-1}}(t).$
\end{proposition}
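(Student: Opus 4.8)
The plan is to verify the claimed identity directly from the explicit form of the barycentric quantile function given in \eqref{eq.BarycentricQF}, exploiting the fact that, per the rewriting conventions recalled in Section~\ref{s2.1} (and presupposed by the Mallows-distance formula \eqref{eqDM}), we may assume all $n$ quantile functions $\Psi_{Y(i)}^{-1}(t)$ are written with the same number $m$ of pieces on a common partition of the domain determined by the cumulative weights $0=w_0<w_1<\cdots<w_m=1$. Under this assumption the pointwise mean and the barycenter will turn out to share the same breakpoints, and the proof reduces to matching them branch by branch.

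First I would form the pointwise mean $\overline{\Psi_Y^{-1}}(t)=\frac{1}{n}\sum_{i=1}^{n}\Psi_{Y(i)}^{-1}(t)$ and evaluate it separately on each interval $[w_{\ell-1},w_\ell)$, $\ell\in\{1,\dots,m\}$. On such an interval every $\Psi_{Y(i)}^{-1}$ has the linear form $c_{Y(i)_\ell}+\left(\frac{2(t-w_{\ell-1})}{w_\ell-w_{\ell-1}}-1\right)r_{Y(i)_\ell}$. The crucial observation is that the slope factor $\frac{2(t-w_{\ell-1})}{w_\ell-w_{\ell-1}}-1$ does not depend on $i$ — this is exactly what the common-partition assumption buys us — so it factors out of the average. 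Averaging over $i$ then leaves $\frac{1}{n}\sum_{i=1}^{n}c_{Y(i)_\ell}+\left(\frac{2(t-w_{\ell-1})}{w_\ell-w_{\ell-1}}-1\right)\frac{1}{n}\sum_{i=1}^{n}r_{Y(i)_\ell}$, which by the definitions $c_{b\ell}=\frac{1}{n}\sum_i c_{Y(i)_\ell}$ and $r_{b\ell}=\frac{1}{n}\sum_i r_{Y(i)_\ell}$ is precisely the $\ell$-th branch of \eqref{eq.BarycentricQF}. Since the two piecewise-linear functions agree branch by branch on a partition of $[0,1]$, they coincide on all of $[0,1]$, yielding $\Psi_{Y_b}^{-1}(t)=\overline{\Psi_Y^{-1}}(t)$.

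For completeness I would add a short remark explaining why this pointwise mean really is the barycenter, i.e. the minimizer of \eqref{eq1.Barycentric}: writing the objective in the integral form $\sum_i\int_0^1(\Psi_{Y(i)}^{-1}(t)-g(t))^2\,dt$ and minimizing the integrand pointwise in the value $g(t)$ gives the arithmetic mean of the $\Psi_{Y(i)}^{-1}(t)$; this candidate is feasible because a mean of non-decreasing, piecewise-linear functions sharing common breakpoints is again a non-decreasing, piecewise-linear quantile function, so the unconstrained pointwise minimizer lies in the admissible set and is therefore the barycenter.

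The main (and essentially the only) obstacle is the bookkeeping around the common partition: the identity hinges entirely on all functions sharing the same breakpoints $w_\ell$, since otherwise the average need not be piecewise linear on the same pieces and the slope factor would fail to factor out of the sum. Once the shared-partition reduction is in place, the remainder is routine linearity of the finite average, and no further analytic machinery is required.
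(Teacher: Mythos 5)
Your proof is correct and follows essentially the same route as the paper: the paper obtains the barycenter by solving the least-squares problem \eqref{eq2.Barycentric} for the centres and half-ranges (so that $c_{b\ell}$ and $r_{b\ell}$ are arithmetic means) and then identifies \eqref{eq.BarycentricQF} with the pointwise mean of the $\Psi_{Y(i)}^{-1}$, which is exactly your branch-by-branch computation on the common partition. Your closing remark --- establishing optimality of the pointwise mean by minimizing the integrand of \eqref{eq1.Barycentric} pointwise in $g(t)$ and then checking that this unconstrained minimizer is a valid quantile function --- is a slightly more general substitute for the paper's parametric least-squares step, but it rests on the same idea, so no genuinely different machinery is involved.
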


The mean value of the barycentric histogram $\overline{Y}_{b},$ is the symbolic mean of the histogram-valued variable $Y$ \citep{tesedias14}:
$\displaystyle \overline{Y}=\int_{0}^{1} \overline{\Psi_{Y}^{-1}}(t) dt=\int_{0}^{1} \Psi_{Y_b}^{-1}(t) dt.$

The barycentric histogram corresponds to the ``center of gravity'' of the set of histograms. This may be observed in the following example.

\begin{example} \label{ex3}

Consider the distributions of variable Air Time for the airlines in Example \ref{ex1}, the barycentric histogram may be represented by the quantile function
\small{
$$
\overline{\Psi_{AirTime}^{-1}}(t)=\left\{\begin{array}{lll}
                     43.26+\frac{t}{0.2}\times 12.06  & \quad if \quad  & 0 \leq t <0.2 \\
                    61.56 +\frac{t-0.2}{0.2} \times 6.24 &  \quad if \quad & 0.2 \leq t < 0.4\\
                   77 +\frac{t-0.4}{0.2} \times 9.2 &  \quad if \quad & 0.4 \leq t < 0.6 \\
                    95.88 +\frac{t-0.6}{0.2} \times 9.68 &  \quad if \quad & 0.6 \leq t < 0.8 \\
                   162.08 +\frac{t-0.8}{0.2} \times 56.52 & \quad if \quad & 0.8 \leq t \leq 1
 \end{array}
  \right.
  $$}

The quantile functions that represent the distributions for the five airlines, and the respective barycentric histogram, are represented in Figure \ref{figBarHist}. This Barycenter provides more information than the symbolic mean (see Definition \ref{defmean}), that is equal to $87.96.$

\begin{figure}[h!] 
  \centering
    \includegraphics[width=1\textwidth]{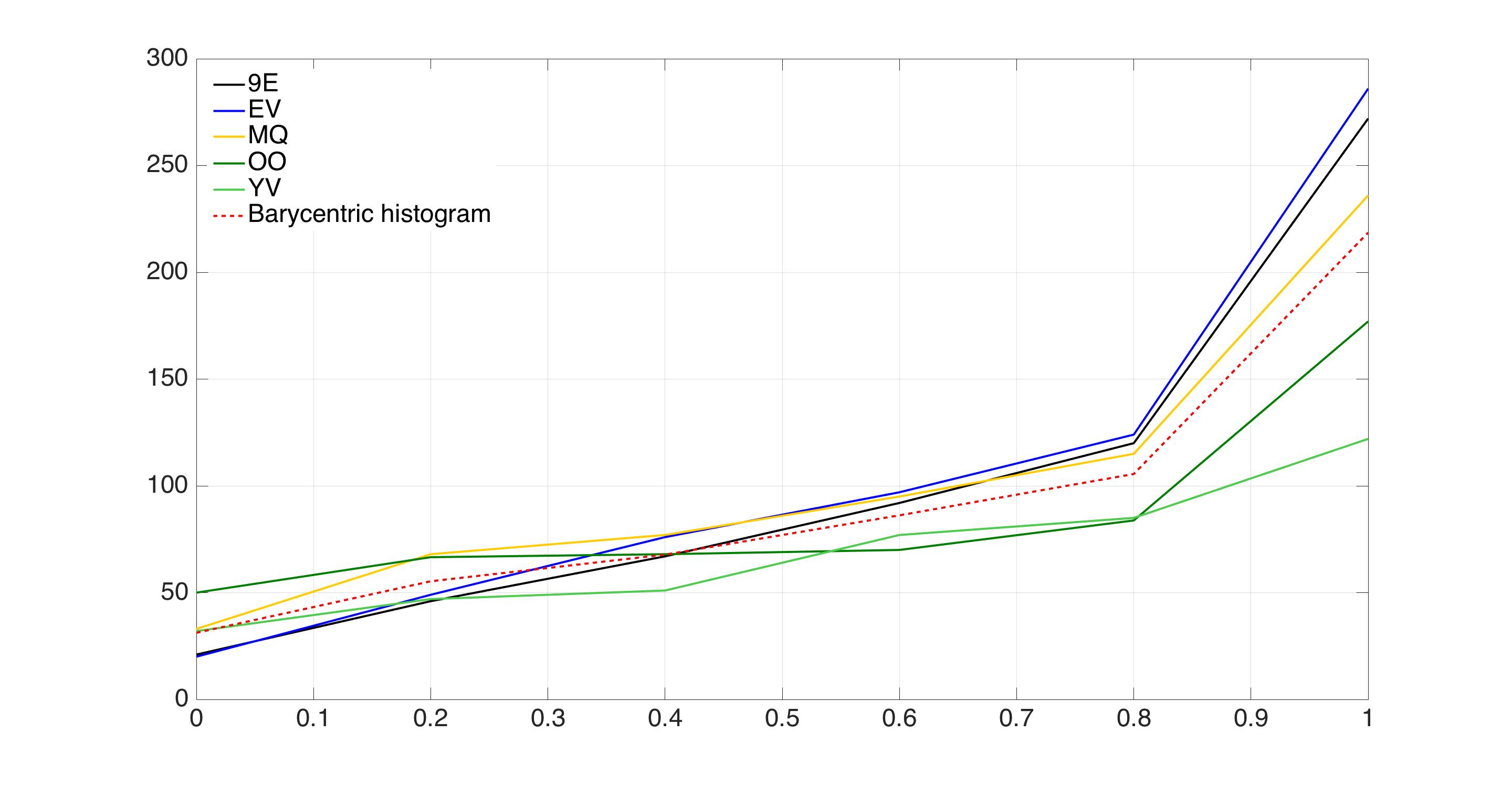}
   \caption{Observed and barycentric quantile functions of variable Air Time, in Example \ref{ex1}.}\label{figBarHist}
\end{figure}


\end{example}



The concept of barycentric histogram allows for the definition of a measure of inertia based on the \textit{Mallows distance} \cite{irve06}.
The total inertia $(TI),$ with respect to the barycentric histogram   $Y_{b},$ of a set of $n$ histogram observations $Y(i), i=1,\ldots, n$, is given by
$TI= \displaystyle \sum_{i=1}^{n} D^{2}_{M}(\Psi_{Y(i)^{-1}}(t),\Psi_{Y_b^{-1}}(t)).$
\cite{irve06} proved that the \textit{Mallows distance} allows for the Huygens decomposition of inertia for clustered histogram-valued  data:
\begin{equation}
\begin{array}{lll}
 TI & = & BI+WI \\
 & = & \displaystyle \sum_{k=1}^{s}n_k  D^{2}_{M}(\Psi_{Y_{b_k}^{-1}}(t),\Psi_{{Y_b}^{-1}}(t))+ \displaystyle \sum_{k=1}^{s} \sum_{i=1}^{n_k} D^{2}_{M}(\Psi_{Y(i)^{-1}}(t),\Psi_{Y_{b_k}^{-1}}(t))
 \end{array}\label{eq.HuygensTheorem}
\end{equation}

\noindent where $Y_{b_{k}}$ is the barycenter of Group $k$ and $n_{k} = |G_{k}|$ $k\in \left\{1,\ldots,s\right\}.$

\subsection{Linear combination of histograms}\label{s2.3}

The linear combination of histogram-valued variables is not a simple adaptation of the classical definition, as  it is not possible to apply the classical linear combination to the quantile functions that represent the distributions, as in:
$\Psi_{\widehat{Y}(i)}^{-1}(t)=a_{1}\Psi_{X_{1}(i)}^{-1}(t)+a_{2}\Psi_{X_{2}(i)}^{-1}(t)+\ldots+a_{p}\Psi_{X_{p}(i)}^{-1}(t).$
The problem comes from the fact that when we multiply a quantile function by a negative number we do not obtain a non-decreasing function. If non-negativity constraints are imposed on the parameters $a_j, $ $j \in \{1,2, \ldots, p\}$ a quantile function is always obtained; however, this solution forces a direct linear relation between $\Psi_{\widehat{Y}(i)}^{-1}(t)$ and $\Psi_{X_{j}(i)}^{-1}(t)$, which is not acceptable.

In order to define a linear combination that solves the problem of the semi-linearity of the space of the quantile functions and allows for a direct or an inverse linear relation between the involved histogram-valued variables, a method was proposed in \cite{dibr15,dibr17}.  The proposed definition includes two terms for each explicative variable - one for the quantile function that represents each histogram/interval $X_{j}(i)$ and the other for the quantile function that represents the respective symmetric histogram/interval. This solution, however, increases the number of parameters to estimate, one should bear in mind that the number of units $n$ should be large enough. 

\begin{definition}\label{defDSD} \citep{dibr15}
Consider the histogram-valued variables $X_{1}; X_{2}; \ldots; X_{p}$.
Let $\Psi_{X_1(j)}^{-1}(t),$ $\Psi_{X_2(i)}^{-1}(t),\ldots,$ $\Psi_{X_p(i)}^{-1}(t)$,with $t \in [0,1]$, be the quantile functions that represent the distributions the variables take for each unit $i$ and $-\Psi_{X_1(i)}^{-1}(1-t),-\Psi_{X_2(i)}^{-1}(1-t),\ldots,$ $-\Psi_{X_p(i)}^{-1}(1-t),$  the quantile functions that represent the respective symmetric distributions.
The linear combination of  the histogram-valued variables $X_{1}, X_{2}, \ldots, X_{p}$ is a new histogram-valued variable $Y$, each quantile function $\Psi_{Y(i)}^{-1}$ may be expressed as
\begin{equation}\label{eqPredictFQint1}
\Psi_{Y(i)}^{-1}(t)=\sum_{j=1}^{p}a_{j}\Psi_{X_{j}(i)}^{-1}(t)-\sum_{j=1}^{p}b_{j}\Psi_{X_{j}(i)}^{-1}(1-t)
\end{equation}

\noindent with $t \in \left[0,1\right];$ $a_{j},b_{j} \geq 0,$  $j \in \left\{1,2,\ldots,p \right\}.$

\end{definition}

In the particular case of interval-valued variables and assuming uniformity within the observed intervals, the corresponding quantile functions are   $\Psi^{-1}_{X_{j}(i)}(t)=c_{X_{j}(i)}+(2t-1)r_{X_{j}(i)}$ and  $-\Psi^{-1}_{X_{j}(i)}(1-t)=-c_{X_{j}(i)}+(2t-1)r_{X_{j}(i)}, i=1,\ldots, n.$
In this case, the linear combination may be written as \citep{dibr17}:
\begin{equation}\label{eqPredictFQint2}
\Psi_{Y(j)}^{-1}(t)=\sum_{i=1}^{p}\left(a_{i} -b_{i}\right)c_{X_{i}(j)}+\sum_{i=1}^{p}\left(a_{i}+b_{i}\right) r_{X_{i}(j)} \left(2t-1\right)
\end{equation}
\noindent with $t \in \left[0,1\right];$ $a_{j},b_{j} \geq 0,$  $j \in \left\{1,2,\ldots,p \right\}.$

Interval-valued variables where all observations are degenerate intervals, i.e., intervals with null range, are classical variables. In this case (\ref{eqPredictFQint2}) is the classical definition of linear combination. This is in accordance with the purpose of SDA, that the statistical concepts and methods defined for symbolic variables should generalize the classical ones.

\section{Linear Discriminant function}\label{s3}

The process that allows obtaining a linear discriminant function for histogram-valued variables is similar to the classical method.

\subsection{Discriminant function for classical variables}\label{s3.1}

In the classical setup, i.e., with real-valued variables, the linear discriminant function defines a score for each unit, $S$, as  the linear combination of the $p$ descriptive variables $\xi=X' \gamma$ where $X$ is the vector of $p$ centered variables and $\gamma$ the vector of weights $p\times 1.$ The sum of squares of the resulting discriminant scores is given by $\xi'\xi=\gamma' T \gamma,$ where $T=XX'$  is the matrix of the total Sums of Squares and Cross-Products (SSCP) of the matrix $X$. Note that $\gamma' T\gamma$ is the sum of the  squared Euclidean distances $d$ between $S(i)=\displaystyle\sum_{j=1}^{p}\gamma_{j}X_j(i)$ and $\overline{S}=\displaystyle \frac{1}{n}\displaystyle\sum_{i=1}^{n}S(i)$, i.e. $
\displaystyle \gamma'T\gamma=\sum_{i=1}^{n}d^2(S(i),\overline{S}). 
$
Given the decomposition of $T$  as the sum of the matrix of the sum of squares and cross-products between-groups, $B$ and the matrix of the sum of squares and
cross-products within-groups, $W,$  $T = B +W$, we may write $\gamma' T \gamma=\gamma' B \gamma+\gamma' W \gamma.$
The vector $\gamma$ that defines the discriminant function is then estimated such that the ratio between the variability between groups and the variability within groups is maximum, i.e., maximizing $ \displaystyle \lambda=\frac{\gamma' B\gamma}{\gamma' W\gamma}$. This is an easy and classical optimization problem.

\subsection{Discriminant function for histogram-valued variables}\label{s3.2}

We now define the discriminant function that allows obtaining a discriminant score for $p$ histogram-valued variables, as well as the ratio to be optimized to estimate the weight vector of the discriminant function.

\begin{definition} \label{deffuncaodiscr}
Consider $p$  histogram-valued variables represented for each unit $i$ by the respective quantile function $\Psi^{-1}_{X_{j}(i)}(t)$, as in (\ref{eqHFQ}). The score for unit $i$ is   a quantile function, $\Psi_{S(i)}^{-1}(t),$ obtained by the linear combination as in (\ref{eqPredictFQint1}):
\begin{equation}\label{eqScore1}
\Psi_{S(i)}^{-1}(t)=\sum_{j=1}^{p}a_{j}\Psi_{X_{j}(i)}^{-1}(t)-
\sum_{j=1}^{p}b_{j}\Psi_{X_{j}(i)}^{-1}(1-t)
\end{equation}
\noindent with $t \in \left[0,1\right];$ $a_{j},b_{j} \geq 0,$  $j \in \left\{1,2,\ldots,p \right\}.$

\end{definition}

Similarly to the classical case, the sum of the squared Mallows distance between the score for unit $i$, $\Psi_{S(i)}^{-1}(t),$ and the mean of all scores - the quantile function $\overline{\Psi_{S}^{-1}}(t)$ may be written as $\gamma' T\gamma$, where $\gamma=(a_1, \ldots, a_p, b_1, \ldots, b_p)$ is the weight vector and $T$ the matrix of the total Sums of Squares and Cross-Products (SSCP) for the $p$ histogram-valued variables.

For the next results we shall use the following notation:

\begin{itemize}
\item  $\Psi_{S(i)}^{-1}(t)$: quantile function representing the score obtained applying (\ref{eqScore1}), in Definition \ref{deffuncaodiscr}, to the quantile functions $\Psi^{-1}_{X_{j}(i)}(t),$ defined in (\ref{eqHFQ}). For each subinterval $\ell$ this function is defined by 
\begin{equation}\label{eqScore}
\begin{array}{l}
\displaystyle  \sum_{j=1}^{p} \left(a_j c_{X_j(i)_\ell}-b_j c_{X_j(i)_{(m-\ell+1)}}\right)+ \\
\displaystyle + \left(\frac{2(t-w_{\ell-1})}{w_{\ell}-w_{\ell-1}}-1\right)\sum_{j=1}^{p} \left(a_j r_{X_j(i)_\ell}+b_j r_{X_j(i)_{(m-\ell+1)}}\right).
\end{array}
\end{equation}

\item   $\overline{\Psi_{S}^{-1}}(t)$ - $\mathit{barycentric \, score}$: quantile function that represents the barycentric histogram of the $n$ scores, it is the mean of the quantile functions that represent the individual scores. For subinterval $\ell,$  $\overline{\Psi_{S}^{-1}}(t)$ is defined by 
\begin{equation}\label{eqScoreMedia}
\begin{array}{l}
\displaystyle  \sum_{j=1}^{p} \left(a_j \overline{c}_{X_{j_\ell}}-b_j \overline{c}_{X_{j_{(m-\ell+1)}}}\right)+\\
\displaystyle 
+ 
\left(\frac{2(t-w_{\ell-1})}{w_{\ell}-w_{\ell-1}}-1\right)\sum_{j=1}^{p} \left(a_j \overline{r}_{X_{j_\ell}}+b_j \overline{r}_{X_{j_{(m-\ell+1)}}}\right)$$
\end{array}
\end{equation}
\noindent where $\overline{c}_{X_{j_\ell}}$ and $\overline{r}_{X_{j_\ell}}$ are, respectively,  the means of the centres and half ranges of the subinterval $\ell$ for variable $j.$

\item  $\overline{\Psi_{S_k}^{-1}}(t)$ - $\mathit{barycentric \, group \, score}$: quantile function that represents the barycentric histogram of the scores in group $k, (k=1,\ldots, s)$ i.e., it is the mean of the quantile functions that represent the scores in group $k.$ For each subinterval $\ell$, $\overline{\Psi_{S_k}^{-1}}(t)$  is defined by 
\begin{equation}\label{eqScoreMediaGrupo}
\begin{array}{l}
\displaystyle  \sum_{j=1}^{p} \left(a_j \overline{c}_{X_{jk\ell}}-b_j \overline{c}_{X_{jk{(m-\ell+1)}}}\right)+ \\
\displaystyle
+ \left(\frac{2(t-w_{\ell-1})}{w_{\ell}-w_{\ell-1}}-1\right)\sum_{j=1}^{p} \left(a_j \overline{r}_{X_{jk\ell}}+b_j \overline{r}_{X_{jk{(m-\ell+1)}}}\right)$$
\end{array}
\end{equation}
\noindent where $\overline{c}_{X_{jk\ell}}$ and $\overline{r}_{X_{jk\ell}}$ are the means for the observations in group $k,$ of the centers and half ranges of  subinterval $\ell$ for variable $j.$ 

\end{itemize}

\begin{proposition} \label{propT}
Let  $\Psi_{S(i)}^{-1}(t)$ be the score (quantile function) obtained from Definition (\ref{deffuncaodiscr}) considering $p$ histogram-valued variables $\Psi^{-1}_{X_{j}(i)}(t),$ $j\in\left\{1,\ldots,p \right\},$  
and $\overline{\Psi_{S}^{-1}}(t),$ the barycentric score. The sum of the squared  Mallows distance between $\Psi_{S(i)}^{-1}(t)$ and $\overline{\Psi_{S}^{-1}}(t)$ may be written as 
$\displaystyle \sum_{i=1}^{n}D_M^2(\Psi_{S(i)}^{-1}(t),\overline{\Psi_{S}^{-1}}(t))=\gamma' T \gamma,$ 
 where $T$ is the matrix of the total Sums of Squares and Cross-Products (SSCP) of the $p$ histogram-valued variables and $\gamma$ is the $p\times 1$ vector of weights.  

\end{proposition}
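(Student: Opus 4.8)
The plan is to evaluate the sum of squared Mallows distances by means of the closed form (\ref{eqDM}), to substitute the explicit centres and half-ranges of the score and of the barycentric score, and to recognise the result as a quadratic form in $\gamma$. The matrix of that quadratic form is then identified as the total SSCP matrix of the $p$ variables.

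First I would read off from (\ref{eqScore}) the centre and half-range of the score on each subinterval $\ell$, namely
$$c_{S(i)_\ell}=\sum_{j=1}^{p}\left(a_j c_{X_j(i)_\ell}-b_j c_{X_j(i)_{(m-\ell+1)}}\right), \qquad r_{S(i)_\ell}=\sum_{j=1}^{p}\left(a_j r_{X_j(i)_\ell}+b_j r_{X_j(i)_{(m-\ell+1)}}\right),$$
and likewise from (\ref{eqScoreMedia}) the centre $\overline{c}_{S_\ell}$ and half-range $\overline{r}_{S_\ell}$ of the barycentric score, with the means $\overline{c}_{X_{j_\ell}},\overline{r}_{X_{j_\ell}}$ in place of the individual values. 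Since these expressions are linear in the $a_j,b_j$ and since averaging over the units is linear, the differences $c_{S(i)_\ell}-\overline{c}_{S_\ell}$ and $r_{S(i)_\ell}-\overline{r}_{S_\ell}$ are exactly the same linear combinations applied to the \emph{centred} quantities $\widetilde{c}_{X_j(i)_\ell}=c_{X_j(i)_\ell}-\overline{c}_{X_{j_\ell}}$ and $\widetilde{r}_{X_j(i)_\ell}=r_{X_j(i)_\ell}-\overline{r}_{X_{j_\ell}}$.

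Next I would encode these two linear combinations as inner products with $\gamma=(a_1,\dots,a_p,b_1,\dots,b_p)$. For each pair $(i,\ell)$ I define vectors $u_{i\ell},v_{i\ell}\in\mathbb{R}^{2p}$ whose first $p$ coordinates are $\widetilde{c}_{X_j(i)_\ell}$ (resp.\ $\widetilde{r}_{X_j(i)_\ell}$) and whose last $p$ coordinates are $-\widetilde{c}_{X_j(i)_{(m-\ell+1)}}$ (resp.\ $+\widetilde{r}_{X_j(i)_{(m-\ell+1)}}$), so that $c_{S(i)_\ell}-\overline{c}_{S_\ell}=\gamma'u_{i\ell}$ and $r_{S(i)_\ell}-\overline{r}_{S_\ell}=\gamma'v_{i\ell}$. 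Substituting into (\ref{eqDM}) and summing over $i$ then gives
$$\sum_{i=1}^{n}D_M^2(\Psi_{S(i)}^{-1}(t),\overline{\Psi_{S}^{-1}}(t))=\gamma'\!\left[\sum_{i=1}^{n}\sum_{\ell=1}^{m}p_\ell\left(u_{i\ell}u_{i\ell}'+\tfrac13\, v_{i\ell}v_{i\ell}'\right)\right]\!\gamma=\gamma'T\gamma,$$
where $T:=\sum_{i,\ell}p_\ell\,(u_{i\ell}u_{i\ell}'+\tfrac13\,v_{i\ell}v_{i\ell}')$. As a weighted sum of outer products of centred data vectors, $T$ is symmetric and positive semi-definite, and its entries are precisely the total sums of squares and cross-products of the centred centres and half-ranges of the $p$ variables, i.e.\ the total SSCP matrix; this completes the identification.

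The step I expect to be most delicate is the index bookkeeping for the $b_j$ terms: the symmetric quantile function $-\Psi_{X_j(i)}^{-1}(1-t)$ reverses the order of the subintervals, so the $b_j$ contribution on piece $\ell$ is built from subinterval $m-\ell+1$ of $X_j(i)$, the centre entering with a minus sign and the half-range with a plus sign (cf.\ the properties of $-\Psi^{-1}(1-t)$ listed in Section \ref{s2}). Keeping this reflection consistent across the score (\ref{eqScore}), the barycentric score (\ref{eqScoreMedia}), and the centred vectors $u_{i\ell},v_{i\ell}$ is what makes the quadratic form come out correctly; everything else reduces to the routine expansion of a square.
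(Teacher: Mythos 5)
Your proposal is correct and follows essentially the same route as the paper: both expand the squared Mallows distance (\ref{eqDM}) applied to (\ref{eqScore}) and (\ref{eqScoreMedia}), exploit the linearity of the score in $(a_j,b_j)$ and of centring, and identify the resulting quadratic form's matrix as the total SSCP matrix --- your $T=\sum_{i,\ell}p_\ell\bigl(u_{i\ell}u_{i\ell}'+\tfrac13 v_{i\ell}v_{i\ell}'\bigr)$ is exactly the paper's $T=AA'$ written as a sum of outer products of the rows of the centred, $\sqrt{p_\ell}$- and $\sqrt{p_\ell/3}$-weighted data matrix, up to a harmless reordering of the coordinates of $\gamma$ (you block the $a_j$'s and $b_j$'s, the paper interleaves them). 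Your sign bookkeeping for the reflected subintervals ($-\widetilde{c}$, $+\widetilde{r}$) matches the explicit entries $t_{hq}$ the paper gives for $T$.
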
 

\begin{proof}

Consider the quantile functions $\overline{\Psi_{S(i)}^{-1}}(t)$  and $\overline{\Psi_{S}^{-1}}(t)$ defined as in (\ref{eqScore}) and (\ref{eqScoreMedia}), respectively.  Applying (\ref{eqDM}) in  Definition \ref{defDM} for the Mallows distance and after some algebra, it is possible to write $\displaystyle \sum_{i=1}^{n}D_M^2(\Psi_{S(i)}^{-1}(t),\overline{\Psi_{S}^{-1}}(t)),$ in  matricial form as $\gamma' T \gamma$ where,

\begin{itemize}

\item $\gamma$ is the  $p\times 1$ vector of non-negative weights, i.e. $\gamma'=\begin{array}{ccccc}[a_1 & b_1  & \ldots & a_p & b_p] \end{array},$ with $a_j,b_j \geqslant 0,$ $\forall j \in \left\{1, \ldots, p \right\}.$

\item $T$ is the matrix of the total SSCP  obtained by the product $A \times A',$ where $A'$ is a $2mn\times 2p$ matrix defined as:

$$A'=\left[\begin{array}{cccc}
\mathbf{A_{11}} & \mathbf{A_{12} }& \ldots & \mathbf{A_{1p}} \\
\mathbf{A_{21}} & \mathbf{A_{22}} & \ldots & \mathbf{A_{2p} }\\
\cdots & \cdots & \ldots & \cdots \\
\mathbf{A_{m1}} & \mathbf{A_{m2}} & \ldots & \mathbf{A_{mp} 
}\end{array}
\right]$$

where $\mathbf{A_{\ell j}}$,  for $\ell \in \left\{1,2,\ldots,m \right\}$ and $j \in \left\{1,2,\ldots,p \right\}$, are  $2n \times 2$ matrices. The elements in each matrix $\mathbf{A_{\ell j}}=[a_{hq}],$ with $h\in \left\{1,2,\ldots,2n \right\}$ and $q \in \left\{1,2\right\}$  are defined as:
$$
a_{hq}=\left\{
\begin{array}{lll}
                  \sqrt{p_\ell}\left(c_{X_{j}(h)_{\ell}} - \overline{c}_{X_{j_\ell}} \right)  & if & 1\leqslant h \leqslant n  \textit{ and } q=1 \\
- \sqrt{p_\ell}\left(c_{X_{j}(h)_{m-\ell+1}}- \overline{c}_{X_{j_{(m-\ell+1)}} } \right)  & if & n+1\leqslant h \leqslant 2n  \textit{ and } q=1 \\
 \sqrt{\frac{p_\ell}{3}}\left(r_{X_{j}(h)_{\ell}}- \overline{r}_{X_{j_\ell}} \right)  & if & 1\leqslant h \leqslant n  \textit{ and } q=2 \\
- \sqrt{\frac{p_\ell}{3}}\left(r_{X_{j}(h)_{m-\ell+1}} - \overline{r}_{X_{j_{(m-\ell+1)}} } \right)  & if & n+1\leqslant h \leqslant 2n  \textit{ and } q=2 \\
 \end{array} 
 \right.
$$

 $T= A \times A'$ is a symmetric matrix of order $2p$, its elements  $t_{hq},$  $h,q\in \left\{1,2,\ldots,2p \right\}$ are defined as:
{\scriptsize
$$
t_{hq}=\left\{\begin{array}{lll}
                   \displaystyle\sum\limits_{i=1}^n \displaystyle\sum\limits_{\ell=1}^m p_{\ell}\left(\tilde{c}_{X_{\frac{h+1}{2}}(i)_{\ell}}
                   \tilde{c}_{X_{\frac{q+1}{2}}(i)_{\ell}}+\frac{1}{3}\tilde{r}_{X_{\frac{h+1}{2}}(i)_{\ell}}\tilde{r}_{X_{\frac{q+1}{2}}(i)_{\ell}} \right)  & if & \textit{ h,q are odd }  \\
                        \displaystyle\sum\limits_{i=1}^n \displaystyle\sum\limits_{\ell=1}^m p_{\ell}\left(\tilde{c}_{X_{\frac{h}{2}}(i)_{(m-\ell+1)}}
                   \tilde{c}_{X_{\frac{q}{2}}(i)_{(m-\ell+1)}}+\frac{1}{3}\tilde{r}_{X_{\frac{h}{2}}(i)_{(m-\ell+1)}}\tilde{r}_{X_{\frac{q}{2}}(i)_{(m-\ell+1)}} \right)  & if & \textit{ h,q are even}  \\
                         \displaystyle\sum\limits_{i=1}^n \displaystyle\sum\limits_{\ell=1}^m p_{\ell}\left(-\tilde{c}_{X_{\frac{h}{2}}(i)_{\ell}}
                   \tilde{c}_{X_{\frac{q+1}{2}}(i)_{(m-\ell+1)}}+\frac{1}{3}\tilde{r}_{X_{\frac{h}{2}}(i)_{\ell}}\tilde{r}_{X_{\frac{q+1}{2}}(i)_{(m-\ell+1)}} \right) & if & \textit{h is even, q is odd } \\
 \end{array} \right.
$$
}

\normalsize
with $\tilde{c}_{X_{\frac{\delta+1}{2}}(i) \ell}=c_{X_{\frac{\delta+1}{2}}(i) \ell}-
\overline{c}_{X_{\frac{\delta+1}{2}} \ell}$ and $\tilde{r}_{X_{\frac{\delta+1}{2}}(i) \ell}=r_{X_{\frac{\delta+1}{2}}(i) \ell}-
\overline{r}_{X_{\frac{\delta+1}{2}}\ell},$ $\delta \in \{h,q\}.$
\end{itemize}
\end{proof}

According to the Huygens theorem (\ref{eq.HuygensTheorem}), the following decomposition holds:
\begin{equation}\label{eqHuygnesDecomp}
\begin{array}{l}
 \sum\limits_{i=1}^{n} D^{2}_{M}(\Psi_{S(i)}^{-1}(t),\overline{\Psi_{S}^{-1}}(t))=\\
  \quad =\sum\limits_{k=1}^{s}|G_{k}| D^{2}_{M}(\overline{\Psi_{S}^{-1}}(t),\overline{\Psi_{S_{k}}^{-1}}(t))+\sum\limits_{k=1}^{s}\sum\limits_{i \in G_{k}} D^{2}_{M}(\Psi_{S(i)}^{-1}(t),\overline{\Psi_{S_{k}}^{-1}}(t))
  \end{array}
  \end{equation}

\noindent with $|G_{k}|$ the cardinal of group $k, $ and the quantile functions $\Psi_{S(i)}^{-1}(t)$ - $\mathit{score}$, $\overline{\Psi_{S}^{-1}}(t)$ - $\mathit{barycentric \, score} $ and $\overline{\Psi_{S_k}^{-1}}(t)$ - $\mathit{barycentric \, group \, score}$, as defined in (\ref{eqScore}),  (\ref{eqScoreMedia}) and (\ref{eqScoreMediaGrupo}), respectively. 

This result shows that, under our framework, the SSCP may also be decomposed
 in the sum of the squares and crossproducts between groups and the sum of the
  squares and crossproducts within groups. We may then write
$\gamma' T \gamma= \gamma' B \gamma+\gamma' W \gamma,$
where $B=[b_{hq}]$ and $W=[w_{hq}]$ are symmetric matrices of order $2p$.
The elements $b_{hq}$ of the matrix $B,$ with $h,q \in \{1,\ldots,2p \}$ are defined as:

{\scriptsize
$$
b_{hq}=\left\{\begin{array}{lll}
                   \displaystyle\sum\limits_{k=1}^s n_k \displaystyle\sum\limits_{\ell=1}^m p_{\ell}\left(\tilde{\overline{c}}_{X_{\frac{h+1}{2}k\ell}}
                   \tilde{\overline{c}}_{X_{\frac{q+1}{2} k\ell}}+\frac{1}{3}\tilde{\overline{r}}_{X_{\frac{h+1}{2}k\ell}}\tilde{\overline{r}}_{X_{\frac{q+1}{2}k\ell}} \right)  & if & \textit{ h,q are odd }  \\
                        \displaystyle\sum\limits_{k=1}^s n_k \displaystyle\sum\limits_{\ell=1}^m p_{\ell}\left(\tilde{\overline{c}}_{X_{\frac{h}{2}k(m-\ell+1)}}
                   \tilde{\overline{r}}_{X_{\frac{q}{2}k(m-\ell+1)}}+\frac{1}{3}\tilde{\overline{r}}_{X_{\frac{h}{2}k(m-\ell+1)}}\tilde{\overline{r}}_{X_{\frac{q}{2}k(m-\ell+1)}} \right)  & if & \textit{ h,q are even}  \\
                         \displaystyle\sum\limits_{k=1}^s n_k \displaystyle\sum\limits_{\ell=1}^m p_{\ell}\left(-\tilde{\overline{c}}_{X_{\frac{h}{2}k\ell}}
                   \tilde{\overline{c}}_{X_{\frac{q+1}{2}k(m-\ell+1)}}+\frac{1}{3}\tilde{\overline{r}}_{X_{\frac{h}{2}k\ell}}\tilde{\overline{r}}_{X_{\frac{q+1}{2}k(m-\ell+1)}} \right) & if & \textit{h is even, q is odd } \\
 \end{array} \right.
$$
}
\normalsize
\noindent with $\tilde{\overline{c}}_{X_{\frac{\delta+1}{2}k\ell}}=\overline{c}_{X_{\frac{\delta+1}{2}\ell}}-\overline{c}_{X_{\frac{\delta+1}{2}k\ell}}$ and $\tilde{\overline{r}}_{X_{\frac{\delta+1}{2}k\ell}}=\overline{r}_{X_{\frac{\delta+1}{2}\ell}}-\overline{r}_{X_{\frac{\delta+1}{2}k\ell}},$ $\delta \in \left\{ h,q \right\}.$

The elements $w_{hq}$ of matrix $W$,  $h,q \in \{1,\ldots,2p\}$ are defined as:
{\scriptsize
$$
w_{hq}=\left\{\begin{array}{lll}
                   \displaystyle\sum\limits_{k=1}^s  \displaystyle\sum\limits_{i=1}^{n_k}\displaystyle\sum\limits_{\ell=1}^m p_{\ell}\left(\tilde{\overline{c}}_{X_{\frac{h+1}{2}k}(i)_\ell}
                   \tilde{\overline{c}}_{X_{\frac{q+1}{2}k}(i)_\ell}+\frac{1}{3}\tilde{\overline{r}}_{X_{\frac{h+1}{2}k}(i)_\ell}\tilde{\overline{r}}_{X_{\frac{q+1}{2}k}(i)_\ell} \right)  & if & \textit{ h,q are odd }  \\
                    \displaystyle\sum\limits_{k=1}^s  \displaystyle\sum\limits_{i=1}^{n_k}\displaystyle\sum\limits_{\ell=1}^m p_{\ell}\left(\tilde{\overline{c}}_{X_{\frac{h}{2}k}(i)_{(m-\ell+1)}}
                   \tilde{\overline{r}}_{X_{\frac{q}{2}k}(i)_{(m-\ell+1)}}+\frac{1}{3}\tilde{\overline{r}}_{X_{\frac{h}{2}k}(i)_{(m-\ell+1)}}\tilde{\overline{r}}_{X_{\frac{q}{2}k}(i)_{(m-\ell+1)}} \right)  & if & \textit{ h,q are even}  \\
                          \displaystyle\sum\limits_{k=1}^s  \displaystyle\sum\limits_{i=1}^{n_k}\displaystyle\sum\limits_{\ell=1}^m p_{\ell}\left(-\tilde{\overline{c}}_{X_{\frac{h}{2}k}(i)_\ell}
                   \tilde{\overline{c}}_{X_{\frac{q+1}{2}k}(i)_{(m-\ell+1)}}+\frac{1}{3}\tilde{\overline{r}}_{X_{\frac{h}{2}k}(i)_\ell}\tilde{\overline{r}}_{X_{\frac{q+1}{2}k}(i)_{(m-\ell+1)}} \right) & if & \textit{h is even, q is odd } \\
 \end{array} \right.
$$
}
\normalsize
\noindent with $\tilde{\overline{c}}_{X_{\frac{\delta+1}{2}k}(i)_\ell}$ and $\tilde{\overline{r}}_{X_{\frac{\delta+1}{2}k}(i)_\ell}$ as above.

As in the classic case, the parameters of the model, i.e., the components of  vector $\gamma$,  are estimated such that the ratio of the variability between groups and the variability within groups is maximized. The optimization problem is now written as
\begin{equation}\label{CFQP}
\mbox{Maximize }\lambda=\frac{\gamma' B\gamma}{\gamma' W\gamma} \,\, \mbox{ subject to } \, \gamma \geqslant 0.
\end{equation}

Contrary to the classical situation, this is a hard optimization problem as it is now necessary to solve a constrained fractional quadratic problem. The methods to solve this problem are presented in the next section.

\subsection{Optimization of constrained fractional quadratic problem }\label{s3.3}
Problem (\ref{CFQP}) is non-convex, and finding the global optima in this class requires a computational effort that increases exponentially with the size, in this case, of matrices $B$ and $W$. 
\\
Exact methods for global optima, as Branch and Bound, are heavy in terms of memory and computational time. Attempting to solve the  instances of problem (\ref{CFQP}), from our data, using proven software like Baron \citep{Baron}, has failed even for small size problems. Typically, good feasible solutions were found in the first iterations but the algorithm ended up unable to establish the global optimality of the best solution found, even when largely increasing the maximum number of iterations. The incumbent solution  obtained by the software, $\tilde{\gamma}$, allowed to define a lower bound for the optimal solution: 
\begin{equation}
\tilde{\lambda}=\frac{\tilde{\gamma}' B\tilde{\gamma}}{\tilde{\gamma}' W\tilde{\gamma}} \le \lambda^*= \max_{{\gamma} \geqslant 0} \frac{\gamma' B\gamma}{\gamma' W\gamma}.
\end{equation}

To improve this lower bound, or to prove optimality, it is necessary to introduce an upper bound:
$\tilde{\lambda} \le \lambda^* \le \bar{\lambda}.$
If for a small $ \epsilon$, we have $ \bar{\lambda}-\tilde{\lambda} \le \epsilon $ then we accept $\tilde{\gamma}$ as an  $ \epsilon-\mbox{optimal solution}$. In a numerical method we do not expect to have a zero gap solution ($\epsilon=0$), so we defined  $\epsilon=10^{-4}$ as the intended accuracy for our numerical results. 

Tight bounds for general constrained fractional quadratic problems based on copositive relaxation were proposed in \cite{Amaral2014}. 
Let $X\bullet Y= trace (XY)$ be the Frobenius inner product of two matrices $X$ and $Y$ in the set of symmetric matrices of size $n$, $\mathcal{M}_n$. The cone of completely positive matrices is given by
$\mathcal{C}_n^*=\left\{ X \in \mathcal{M}_n : X=YY' , Y \mbox{ an $n \times k$ matrix with }Y\ge O \right\}.$ Since
\begin{equation}
 \lambda^*= \max_{{\gamma} \geqslant 0} \frac{\gamma' B\gamma}{\gamma' W\gamma}
      = \max_{{\gamma} \geqslant 0\; \gamma' W\gamma=1} \gamma' B\gamma,    
\end{equation}
taking $\Gamma=\gamma' \gamma \in \mathcal{C}_n^*$ with rank$(\Gamma)=1$, and considering that $\gamma' W\gamma={W}\bullet \Gamma$, 
following \cite{Amaral2014} we obtain the following completely positive reformulation of (\ref{CFQP}):
\begin{equation}\label{cop}
\max \left\{ {B}\bullet\Gamma : {W}\bullet \Gamma=1, \Gamma \in \mathcal{C}^*_{n}\right\}\, .\end{equation}
Checking condition $\Gamma \in
\mathcal{C}^*_{n}$ is (co-)NP-hard \cite{MuKa87,Dick11b}, but 
knowing that the cone of symmetric, non-negative and semi-definite matrices, denoted as doubly non-negative matrices, represented by  
$\mathcal{D}_n=\left\{ X \in \mathcal{M}_n : y^TXy \ge 0, \forall y \in \mathbb{R}^n\; ,  X\ge O \right\}\,$
provides an approximation for $\mathcal{C}_n^*$, since $\mathcal{C}_n^*\subseteq  \mathcal{D}_n$, it is then possible to exploit this relaxation to obtain an upper bound for
(\ref{cop}),  by solving
\begin{equation}\label{dnn}
\bar{\lambda}=\max \left\{ {B}\bullet\Gamma : {W}\bullet \Gamma=1, \Gamma \in \mathcal{D}_{n}\right\}\, .\end{equation}
This upper bound was enough in all instances to close the gap for $\epsilon=10^{-4}$ and to prove $\epsilon-$optimality of the incumbent solution provided by Baron. This allowed using this solution with confidence to estimate the model parameters.

\subsection{Classification in two \textit{a priori} groups}\label{s3.4}

From the discriminant function in Definition \ref{deffuncaodiscr}  and using the Mallows distance, it is possible to classify an unit in one of two groups, $G_{1},$ $G_{2}$.
Let $\overline{\Psi^{-1}_{S_{G_1}}}(t)$ and $\overline{\Psi^{-1}_{S_{G_2}}}(t)$,  be the quantile functions that represent the barycentric score of each group and let $\Psi^{-1}_{S(i)}(t)$ be  the quantile function that represents the score of  unit $i.$ 
Unit $i$ is assigned to Group $G_1$ if
$D^{2}_M\left(\Psi_{S(i)}^{-1}(t),\overline{\Psi^{-1}_{S_{G_1}}}(t)\right)<D^{2}_M\left(\Psi_{S(i)}^{-1}(t),\overline{\Psi^{-1}_{S_{G_2}}}(t)\right),$
otherwise it is assigned to Group $G_2$; i.e.,
it is assigned to the group for which the Mallows distance between its score and the corresponding barycentric score is minimum.

\section{Experiments with simulated data} \label{s4}

In this section, we evaluate the performance of the proposed discriminant method, for  histogram and interval-valued variables, under different conditions.

\subsection{Description of the simulation study} \label{s4.1}

Symbolic data tables that illustrate different situations were created; a full factorial design was employed, similar for histogram and interval-valued variables, with the following factors:
	
	\begin{itemize}
  	\item Three variables: $X_1, X_2, X_3$ and two groups: $G_1,G_2$
	\item Four learning sets and two test sets: 
		\begin{description}
			\item[Learning sets:] $|G_1|=10,$ $|G_2|=40$;   $|G_1|=|G_2|= 25$ ; 
			$|G_1|=50,$ $|G_2|=200$;  $|G_1|=|G_2|=125$.
			\item[Test Sets:] $|G_1|=200,$ $|G_2|=800$;  $|G_1|=|G_2|=500$.
		\end{description}	
		
	\item Different levels of similarity between the groups 
	\begin{itemize}
	\item For histogram-valued variables - four levels defined by the mean and standard deviation of the distributions:
		\begin{description}
			\item[ Case HA:] Similar mean and similar standard deviation;
			\item[ Case HB:] Similar mean and different standard deviation;
			\item[ Case HC:] Different mean and similar standard deviation;
			\item[ Case HD:] Different mean and different standard deviation.
		\end{description}	
		
	For each case above, four different distributions are considered: Uniform, Normal,Log-normal, Mixture of distributions.
	\end{itemize}
	\begin{itemize}
	\item For interval-valued variables - six levels defined by the centers and half-ranges of the intervals:
		\begin{description}
			\item[ Case IA:] Low variation in half range and similar center;
			\item[ Case IB:] Large variation in half range and similar center;
			\item[ Case IC:] Low variation in centre and similar half range;
			\item[ Case ID:] Large variation in centre and similar half range;
			\item[ Case IE:] Low variation in half range and center;
			\item[ Case IF:] Large variation in half range and center.
		\end{description}	

	\end{itemize}

\end{itemize}

To simulate symbolic data tables for the conditions described above, it is necessary to generate the observations of the variables $X_j$ in the two groups. For the case of histogram-valued variables, different distributions are considered, whereas for the case of the interval-valued variables several types of half ranges and centers are considered. 

\textit{Histogram-valued variables}

\begin{enumerate} 

\item For each variable, the values of the mean and the standard deviation were fixed. For the three variables in this study, we selected: $\mu_{X_1}=20, \sigma_{X_1}=8;$ $\mu_{X_2}=10, \sigma_{X_2}=6;$ $\mu_{X_3}=5, \sigma_{X_3}=4.$

\item For each variable $j$ and for each group $k$, two vectors of length $n$ are generated, one with values of the means, $M_{X_{jk}}=[m_{jk}(i)]$ and another with values of the standard deviation  $S_{X_{jk}}=[s_{jk}(i)].$ The $n$ values of each vector $M_{X_{jk}}$ and $S_{X_{jk}},$ are randomly generated, as follows:
	\begin{itemize}
	\item $m_{jk}(i)\sim \mathcal{U}(c_1(1+a),c_2(1+a)),$ with $c_1=0.6 \times \mu_{X_j},$ $c_2=1.4 \times \mu_{X_j},$ $a=0$ in Group $G_1$ and $a>0$ in Group $G_2.$ ($a=0.1$ - cases \textbf{HA, HB}; $a=0.5$ - cases \textbf{HC, HD}).
	\item $s_{jk}(i)\sim \mathcal{U}(h_1(1+b),h_2(1+b)),$ with $h_1=0.6 \times \sigma_{X_j},$ $h_2=1.4 \times \sigma_{X_j},$ $b=0$ in Group $G_1$ and $b>0$ in Group $G_2.$   ($b=0.1$ - cases \textbf{HA, HC}; $b=0.5$ - cases \textbf{HB, HD}). 
	\end{itemize}

\item  From each couple of values $m_{jk}(i)$ and $s_{jk}(i),$ $i\in \left\{1,\ldots,n\right\}$ we randomly generate 5000 real values, $x_{jki}(w),$ $w \in \left\{1,\ldots,5000\right\} $ that allow creating the histograms corresponding to unit $i$  and variable $X_j$ of the group $k$ and distribution $D.$
 According to the distribution $D,$ the real values are generated as follows:
	   \begin{itemize}
            \item D=Uniform distribution:  $x_{jki}(w) \sim \mathcal{U}(a_{jk}(i),b_{jk}(i))$ with $a_{jk}(i)=m_{jk}(i)-\sqrt{3}s_{jk}(i)$ and $b_{jk}(i)=m_{jk}(i)+\sqrt{3}s_{jk}(i)$ 
            
             \item D=Normal distribution: $x_{jki}(w) \sim \mathcal{N}(m_{jk}(i),s_{jk}(i))$
                         
            \item  D=Log-Normal distribution: $x_{jki}(w)\sim Log\mathcal{LN}(\tilde{m}_{jk}(i),\tilde{s}_{jk}(i))$ with $\tilde{m}_{jk}(i)=\frac{1}{2} \ln  \left(\frac{(m_{jk}(i))^4}{(s_{jk}(i))^2+(m_{jk}(i))^2}\right)$ and $\tilde{s}_{jk}(i)= \ln \left(\frac{(s_{jk}(i)))^2}{(m_{jk}(i))^2}\right)$
                \end{itemize}
                
 \item  The 5000 real values $x_{jki}(w)$ generated for each unit $i$ are organized in histograms, which represent its empirical distribution. For all units, all subintervals of each histogram have the same weight, $0.20.$     
 \end{enumerate} 
 
\textit{Interval-valued variables}

For each variable, the values of the center and the half range were fixed, as follows: $c_{X_1}=20, r_{X_1}=8;$ $c_{X_2}=10, r_{X_2}=6;$ $c_{X_3}=5, r_{X_3}=4.$

 \begin{enumerate} 
 \item For each variable $j$, in each group $k$, two vectors of length $n$ are randomly generated, one with values of the centers, $C_{X_{jk}}=[c_{jk}(i)]$ and another with values of the half ranges $R_{X_{jk}}=[r_{jk}(i)],$ as follows:
	\begin{itemize}
	\item $c_{jk}(i)\sim \mathcal{U}(c_1(1+a),c_2(1+a)),$ with $c_1=0.6 \times c_{X_j},$ $c_2= 1.4 \times c_{X_j},$ $a=0$ in Group $G_1$ and $a>0$ in Group $G_2$ ($a=0.05$ - cases \textbf{IA, IB}; $a=0.2$ - cases \textbf{IC, IE} and $a=0.6$ - cases \textbf{ID, IF}).

	\item $r_{jk}(i)\sim \mathcal{U}(h_1(1+b),h_2(1+b)),$ with $h_1=0.6 \times r_{X_j},$ $h_2=1.4 \times r_{X_j},$ $b=0$ in Group $G_1$ and $b>0$ in Group $G_2$  ($b=0.05$ - cases \textbf{IC, ID}; $b=0.2$ - cases \textbf{IA, IE} and $b=0.6$ - cases \textbf{IB, IF}).
 
	\end{itemize}

\item  From each couple of values $c_{jk}(i)$ and $r_{jk}(i),$ $i\in \left\{1,\ldots,n\right\}$  the intervals associated with each unit $i$ are generated: $[c_{jk}(i)-r_{jk}(i);c_{jk}(i)+r_{jk}(i)[.$

 \end{enumerate}  
	
For each case, 100 data tables were generated for each learning set, one data table for each test set was created under the same conditions.
For each replicate of each of the four learning sets, the classification rule based on the proposed discriminant method was applied and the proportion of well assigned units calculated. The mean values and corresponding standard deviations of the obtained results are provided as Supplementary Material (Tables \ref{table1} and \ref{table3}). 

The discriminant functions with the parameters obtained (see Definition \ref{deffuncaodiscr}) for each replicate of the learning sets with the same/different number of units, were then applied to the test set in which the groups have the same/different size. The proposed classification rule (see Subsection \ref{s3.4}) was applied and the proportion of well assigned cases in each group calculated; the obtained values are presented in Supplementary Material (Tables \ref{table2}, \ref{table4}). 

\subsection{Discussion of results} \label{s4.2}

\textit{Histogram-valued variables}

Tables \ref{table1} and \ref{table2} (in Supplementary Material), gather the mean and standard deviation of the proportion of well classified units in test and learning sets, for each of the four cases, under different conditions; Figures \ref{fig1} and \ref{fig2} represent those values for the test sets.

\begin{figure}[h!] 
  \centering
    \includegraphics[width=1\textwidth]{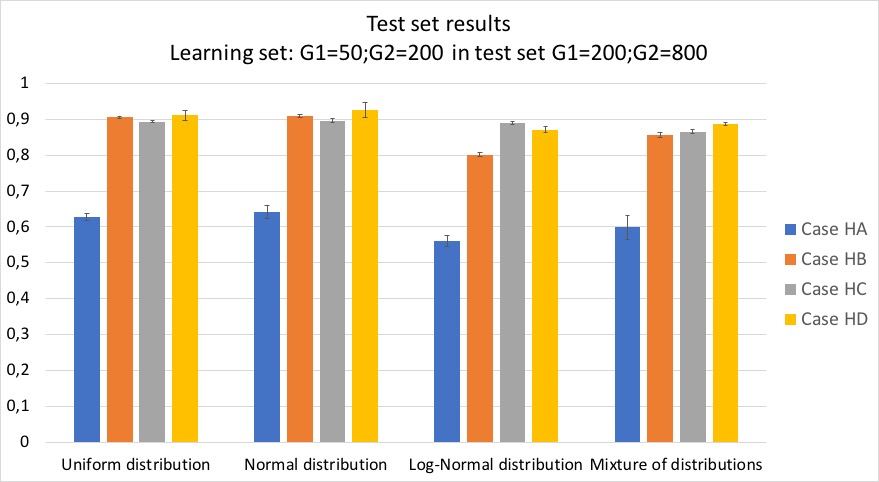}
      \caption{Mean and standard deviation of the proportion of well classified units for test set with $|G1|=200$ and $|G2|= 800$.}\label{fig1}
\end{figure}

\begin{figure}[h!] 
  \centering
    \includegraphics[width=1\textwidth]{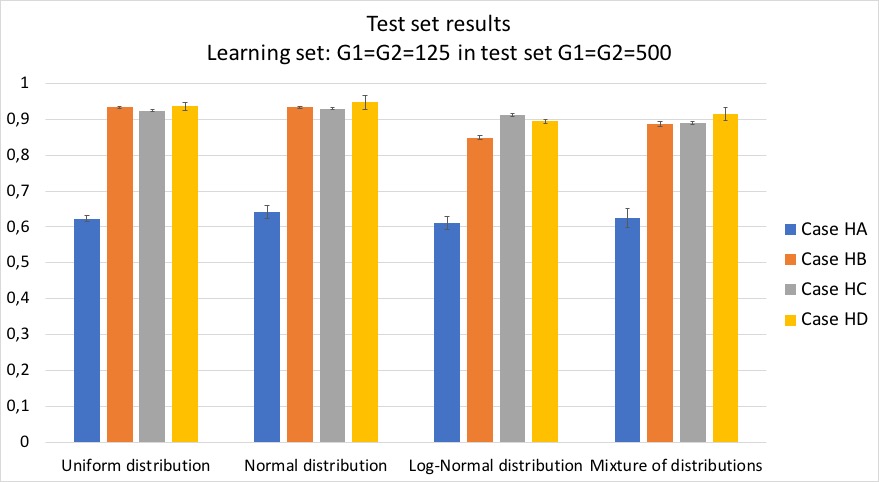}
      \caption{Mean and standard deviation of the proportion of well classified units for test set with $\protect |G1|=|G2|= 500$.} \label{fig2}
\end{figure}

 We  observe similar behaviors for the different distributions in both learning and test sets. In general, increasing the difference between means and/or standard deviations of the two groups provides a better discrimination. The mean of the proportion of well classified is generally slightly higher in the situations where the classes are balanced.
In almost every situation, the mean hit rate is influenced in the same way by the mean and standard deviation; in the case of the Log-normal distribution, when the mean of the two groups is quite different, the increase of the perturbation in the standard deviation seems to have little influence.
In the learning sets, the mean of the proportion of well classified units is slightly higher for smaller samples. This behavior is observed mainly for the Uniform and Normal distributions. 

\vspace{0.5 cm}

\textit{Interval-valued variables}

Tables \ref{table3} and  \ref{table4} (in Supplementary Material) gather the mean and standard deviation of the proportion of well classified units in the learning and  test sets, for each of the cases, under different conditions. Figure \ref{fig3} illustrates the behavior in the test sets, for all situations investigated.

\begin{figure}[h!] 
  \centering
    \includegraphics[width=1\textwidth]{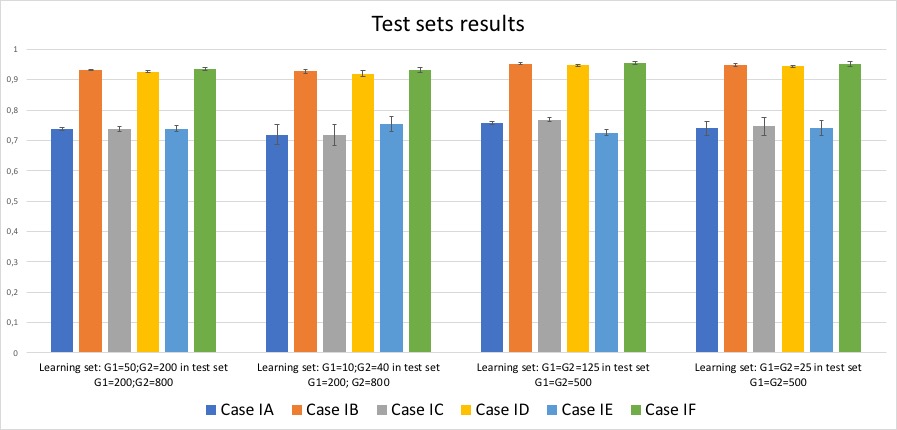}
      \caption{Mean and standard deviation of the proportion of well classified units for test sets defined for interval-valued variables.}\label{fig3}
\end{figure}

The behavior observed is similar for all considered cases. In general, no differences are observed when the distinction between groups is induced by the centers or half-ranges. When the groups are balanced, the proportion of well classified units is slightly higher.  As in the case of histograms, in the learning sets, the mean hit rate is slightly higher for smaller samples.
Observing Figure \ref{fig3}, we notice that when the learning set is smaller, the standard deviations in the corresponding test sets are larger. This behavior is observed mainly in smaller samples of cases IA, IC and IE.

In conclusion, the classification method based on the proposed discriminant function behaves as expected in all considered situations, for both interval and histogram-valued variables, providing good results (in terms of hit rate) in a wide variety of situations.

\section{Application - Flights that Departed NYC in 2013} \label{s5}

This study concerns all outgoing flights from the three New York City airports (LGA, JFK, and EWR) in 2013. The microdata is available in the R package  \textit{nycflights13}, and was originally obtained from the website of the Bureau of Transportation Statistics \footnote{\url{https://www.transtats.bts.gov/DL\_SelectFields.asp},
	accessed 2020-03-10.}. 
We considered the Flights Data that include information about date of departure,  departure and arrival delays, airlines, and air time. The original data contains information concerning flights of the $15$ airlines departing from NYC in 2013, in a total of $327346$ records.

\begin{table}[h!]
\caption{Original 'microdata' (partial view).}
\begin{center} 
\renewcommand{\arraystretch}{1.1}
\renewcommand{\tabcolsep}{0.2cm}
{\scriptsize
\begin{tabular}{c|c|cccc}
 \hline
Airline &  Date  &  $DDELAY $ & $ADELAY$ &  AIRTIME \ldots \\
  \hline
  \multirow {4} {*} {AA} & 2013,1,1 & $-4$  &  $-2$ & 377 & \ldots \\
   &  2013,1,1 & $-1$  & $11$ & 358 &\ldots\\
 & $\vdots$ & $\vdots$ & $\vdots$ &   $\vdots$ &   $\vdots$  \\
   &  2013,1,31 & $-7$ & $-30$ & $359$ & \ldots \\
  \hline
   \multirow {4} {*} {9E} & 2013,1,1 & $16$  &  $4$ &  $171$& \ldots\\
   & 2013,1,4 & $2$  & $10$ & $116$ & \ldots\\
 & $\vdots$ & $\vdots$ & $\vdots$ & $\vdots$ & $\vdots$ \\
   &  2013,1, 31 & $13$ & $-12$ & $118$ & \ldots\\
  \hline
 & $\vdots$ & $\vdots$ & $\vdots$  & $\vdots$  & $\vdots$\\
\end{tabular}}\label{tableOriginal}
\end{center}
\end{table}
The goal is to discriminate Regional and Main carrier, from the flights' departure and arrival delays (DDELAY and ADELAY, respectively, negative times represent early departures or arrivals) and the Airtime, (all recorded in minutes). Part of the original data is illustrated in Table \ref{tableOriginal}. However, the units under study are not the flights but the airlines. Since the amount of information associated with each airline is substantial, we build a symbolic data table, where each unit is the airline/month. We consider only the units where variability was observed, thereby excluding three airlines as well as some months for other specific airlines (where values were constant for some variable(s)), leading to a final symbolic data array with 141 units. The included airlines are (IATA Codes): 9E; AA; B6; DL; EV; FL; MQ;UA; US; VX; WN; YV, four of which are Regional Carriers (9E; EV; MQ; YV), the remaining eight are Main Carriers. In spite of the original number of observations of each variable for each airline/month not being equal, we considered, for all units, histograms where the subintervals have the same weight, 0.20. Part of the symbolic data array is represented in Table \ref{tablesymbolic}.

\begin{table}[h!]
\caption{Symbolic histogram-valued data array (partial view).}
\begin{center}
\renewcommand{\arraystretch}{1.1}
\renewcommand{\tabcolsep}{0.2cm}
{\scriptsize
\begin{tabular}{c|c}
 \hline
Airline / Month &  AIRTIME \\
  \hline
AA /Jan &   $\{[30,139[,0.2;[139,158[,0.2;[158,205[,0.2;[205,255[,0.2; [255,408],0.2 \}$  \\
 $\vdots$ & $\vdots$   \\
AA/Dec & $\{[30,138[,0.2;[138,159[,0.2;[159,203[,0.2;[203,297.8[,0.2; [297.8,426],0.2 \}$ \\
  \hline
 9E/Jan &   $\{[24,43[,0.2;[43,58[,0.2;[58,86[,0.2;[86,121.4[,0.2; [121.,264],0.2 \}$  \\
 $\vdots$ & $\vdots$   \\
9E/Dec & $\{[26,54[,0.2;[54,76[,0.2;[76,104[,0.2;[104,134[,0.2; [134,261],0.2 \}$ \\
  \hline
 $\vdots$ & $\vdots$
\end{tabular}}\label{tablesymbolic}
\end{center}
\end{table}
Alternatively, we may consider interval-valued variables, and register for each unit (airline /month) the range of values recorded for the corresponding flights. The resulting data array is represented in Table \ref{table_interval}.

\begin{table}[h!]
	\caption{Symbolic interval-valued data array (partial view).}
	\begin{center}
		\renewcommand{\arraystretch}{1.1}
		\renewcommand{\tabcolsep}{0.2cm}
		{\scriptsize
			\begin{tabular}{c|c|c|c}
				\hline
				Airline / Month & DDELAY& ADELAY & AIRTIME \\
				\hline
				AA /Jan & $[-16,337]$& $[-54,368]$ &  $[30,408] $  \\
				$\vdots$ & $\vdots$ & $\vdots$ & $\vdots$   \\
				AA/Dec & $[-16,896]$& $[-51,878]$ &   $[30,426] $ \\
				\hline
				9E/Jan &  $[-18,360]$& $[-59,370]$ &   $[24,264] $  \\
				$\vdots$ & $$& $$ &   $\vdots$   \\
				9E/Dec & $[-19,360]$& $[-50,386]$ &  $[26,261]$ \\
				\hline
				$\vdots$ & $\vdots$ & $\vdots$ & $\vdots$
		\end{tabular}}\label{table_interval}
	\end{center}
\end{table}

The methodology described in Sections \ref{s3.2} and \ref{s3.3} lead to the following discriminant function, which then allows obtaining the discriminant score of each unit
$ \Psi^{-1}_{S(i)}(t) =-0.7844\Psi^{-1}_{AIRTIME(i)}(1-t).$
Considering interval-valued variables, the discriminant scores are obtained from
 $\Psi^{-1}_{S(i)}(t) =0.0322\Psi^{-1}_{DDELAY(i)}(t)+0.9597\Psi^{-1}_{AIRTIME(i)}(t).$

As described in Section \ref{s3.4}, each unit is assigned to the group for which the Mallows distance between its score and the corresponding barycentric score is minimum. Figure  \ref{figApril} illustrates the barycentric scores of the two groups and the scores of the units of all airlines in April, for the histogram and the interval data analysis. When histogram-valued variables are used, FL and US are misclassified in Group 2, whereas with interval-valued variables, only FL is misclassified. The proportion of well classified units was the same with and without Leave-One-Out, results are summarized in Table \ref{tableresults}. We may observe that the hit rate is higher when interval-valued variables are considered, although in this case less information is used about the variables. This may be explained by the different behavior of the parameters used in the determination of the discriminant scores.
\begin{figure}[h!]
	\centering
	\begin{tabular}[b]{c}
		\includegraphics[width=.43\linewidth]{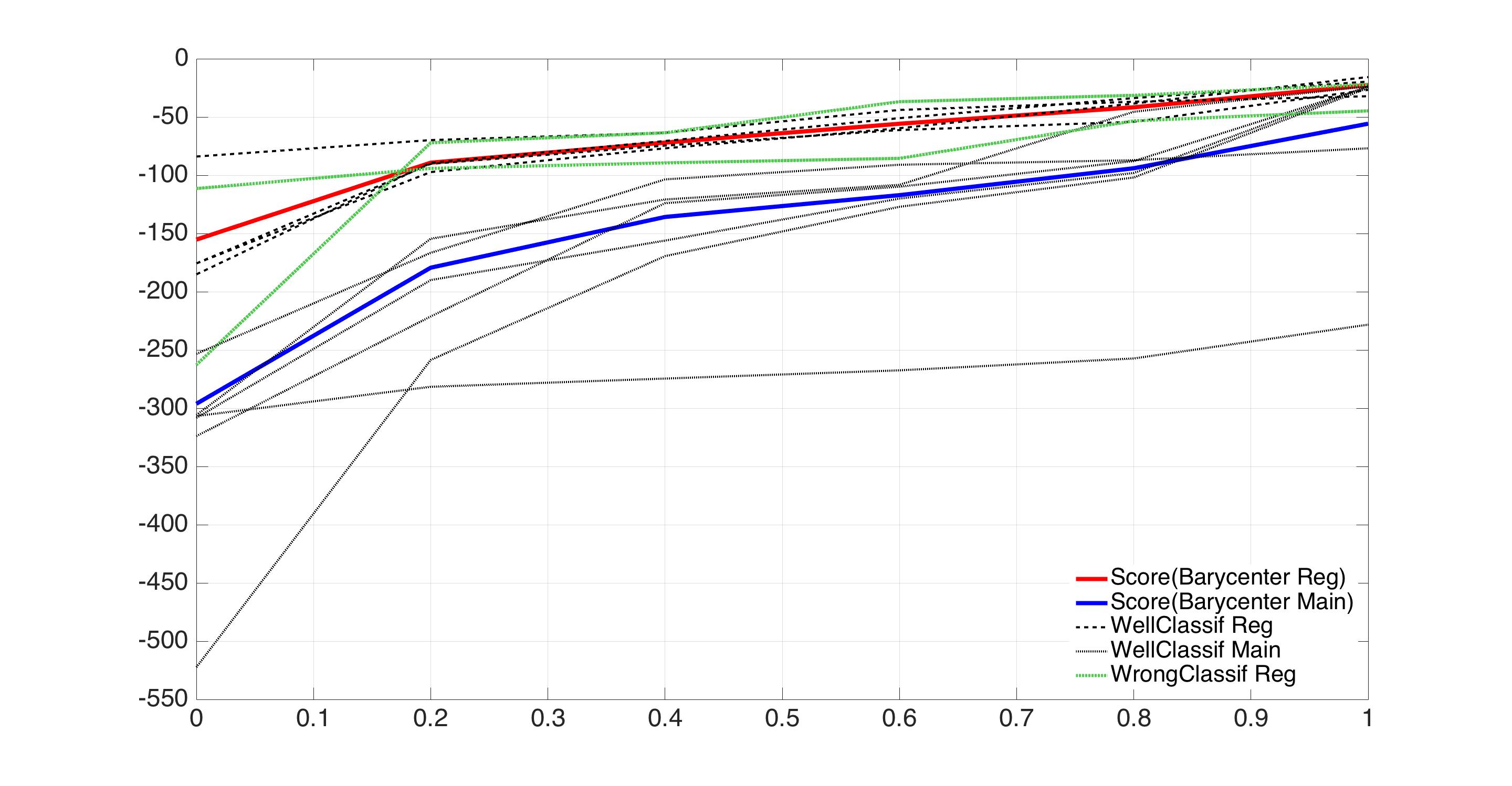} \\
		\small (a)
	\end{tabular} \qquad
	\begin{tabular}[b]{c}
		\includegraphics[width=.43\linewidth]{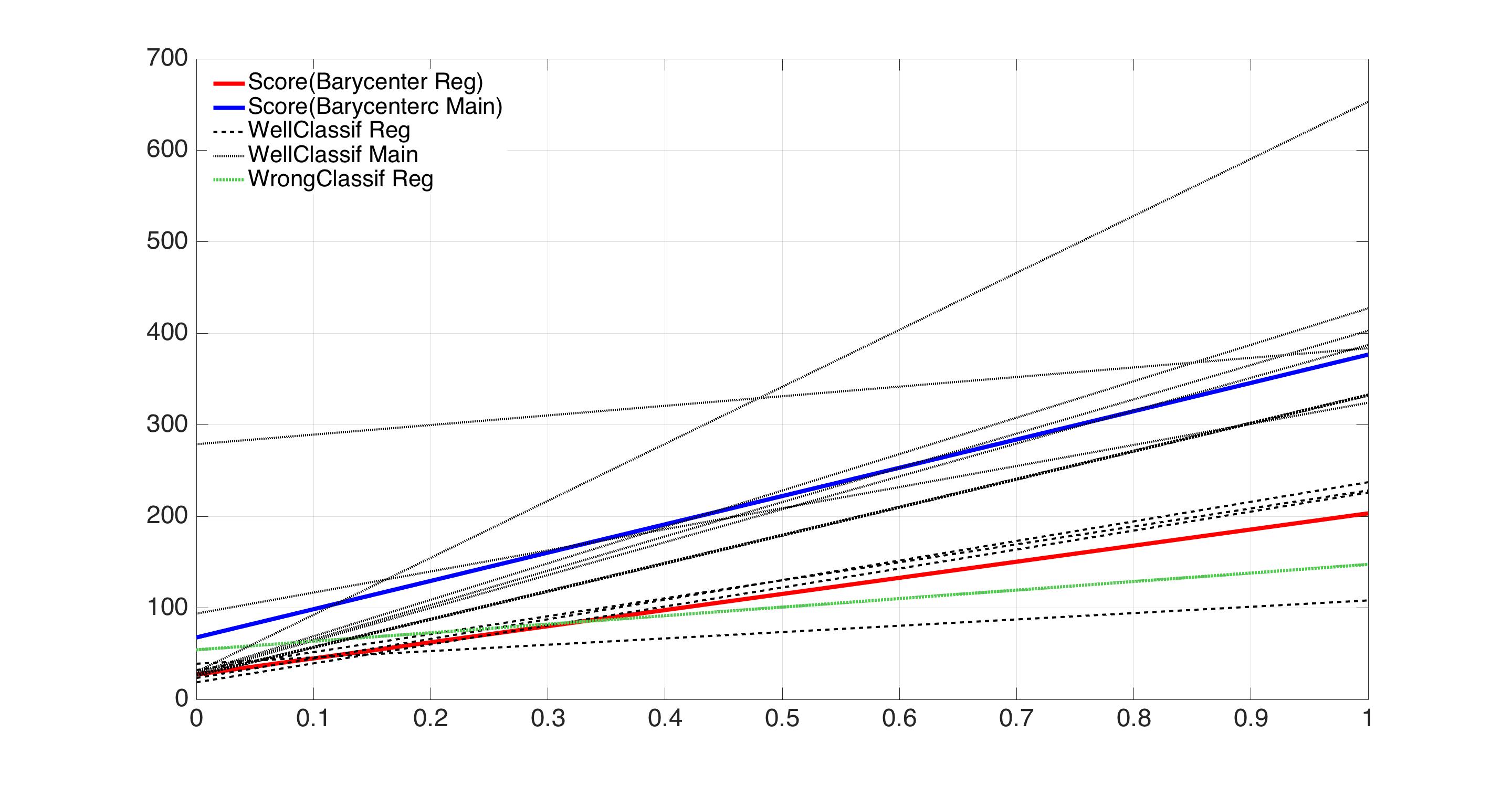} \\
		\small (b)
	\end{tabular}
	\caption{Barycentric scores of the two groups and scores of the units for all airlines  in April. (a) Histogram data (b) Interval data.}\label{figApril}
\end{figure}

\begin{table}
\caption{Classification results considering histogram and interval-valued variables with and without Leave-One-Out.}\label{tableresults}
\begin{center}
{\scriptsize
\begin{tabular}{p{0.2\textwidth} | p{0.3\textwidth}  p{0.4\textwidth}}
\hline
 Type of variable & \multicolumn{2}{c}{Classification with and without LOO}  \\
 \hline
\multirow{3}{*}{Histogram-variables} &  $\%$ Well classified & $83\%$ \\ 
\cline{2-3}
                                      & Units wrongly classified in G1  & None \\ 
                                      &  Units wrongly classified in G2  &  24 units: airlines FL and US, all months \\
                                      \hline
\multirow{4}{*}{Interval-variables} &  $\%$ Well classified & $90\%$  \\
\cline{2-3}
                                      & Units wrongly classified in G1  & None \\ 
                                      &  Units wrongly classified in G2  &   14 units: all months of airline FL \\
                                      & & and months May and Sept of US\\ 
                                      \hline
                                     \end{tabular}}
\end{center}
\end{table}

%

\section{Conclusion} \label{s6}

The discriminant method for histogram-valued variables proposed in this paper allows defining a score for each unit, in the form of a quantile function and of the same type as the variables' records - i.e. a histogram when we have histogram-valued variables and an interval when we have interval-valued variables. The score is obtained by an appopriate  linear combination of variables, where now the model parameters are obtained by the optimization of a constrained fractional quadratic problem. This is a non-trivial problem, since it is non convex, and is solved by using  Branch and Bound and Conic Optimization. The obtained scores allow for a classification in two a priori groups, based on the Mallows distance between the score of each unit and the barycentric score of each class. The methodology is defined for histogram-valued variables, but also applies to interval-valued variables, as an interval is a special case of a histogram; for degenerate intervals, i.e. real values, we obtain the classic method.
The proposed method performs well in a diversity of situations.  It is potentially useful in a wide variety of areas where variability inherent to the data is relevant for the classification task.

Future research perspectives include developing the method to allow for the classification in more than two groups.

\section*{Acknowledgments}

This work is financed by National Funds through the Portuguese funding agency, FCT - Funda\c{c}\~{a}o para a Ci\^{e}ncia e a Tecnologia, within projects UIDB/50014/2020 and UIDB/00297/2020.

\section*{References}

\bibliography{XBib_EJOR}

\begin{landscape}

\section*{Supplementary Material} \label{s7}

\begin{table}[h!]
\begin{center}
{\scriptsize
\begin{tabular}{c|l|cccc}
  \hline
  &  Learning set & Case HA  &   Case HB &   Case HC & Case HD\\
  \hline
 \multirow{4}{*}{\rotatebox{90}{Uniform}} & $|G1|=50$, $|G2|=200$ & $0.6560 (0.0294)$ & $0.9077 (0.0180)$& $0.9066 (0.0168)$	& $0.9234 (0.0165)$ \\
  & $|G1|=10$, $|G2|=40$ & $0.6924 (0.0603)$ & $0.9076 (0.0381)$ & $ 0.9124 (0.0383)$ & $0.9378 (0.0286)$\\	
   & $|G1|=|G2|=125$ &$0.6585 (0.0282)$ & $0.9303 (0.0154)$ & $0.9313 (0.0158)$ & $0.9448 (0.0154)$	\\
    & $|G1|=|G2|=25$ &$0.6964 (0.0534)$ & $0.9356 (0.0337)$ & $0.9348 (0.0312)$ & $0.9560 (0.0233)$\\
   \hline
       \hline 
    \multirow{4}{*}{\rotatebox{90}{Normal}} & $|G1|=50$, $|G2|=200$ & $0.6576 (0.0334)$ & $0.8984 (0.0216)$& $0.9080 (0.0174)$	& $0.9337 (0.0244)$ \\
  & $|G1|=10$, $|G2|=40$ & $0.7008 (0.0692)$ & $0.9000 (0.0421)$ & $ 0.9184 (0.0334)$ & $0.9506 (0.0342)$\\	
   & $|G1|=|G2|=125$ &$0.6636 (0.0301)$ & $0.9232 (0.0180)$ & $0.9316 (0.0139)$ & $0.9560 (0.0196)$	\\
    & $|G1|=|G2|=25$ &$0.6914 (0.0666)$ & $0.9248 (0.0390)$ & $0.9404 (0.0261)$ & $0.9648 (0.0294)$\\
   \hline 
       \hline 
     \multirow{4}{*}{\rotatebox{90}{Log-Normal}} & $|G1|=50$, $|G2|=200$ & $0.6691 (0.0338)$&	$0.8533 (0.0202)$	&$0.9007 (0.0197)$&	$0.9534 (0.0117)$\\
  & $|G1|=10$, $|G2|=40$ & $0.6878 (0.0737)$	& $0.8540 (0.0474)$	& $0.9062 (0.0466)$	& $ 0.9496 (0.0299)$\\	
   & $|G1|=|G2|=125$ & $0.6740 (0.0298)$	&  $0.8911 (0.0178)$	& $ 0.9228 (0.0178)$	& $0.9678 (0.0099)$	\\
    & $|G1|=|G2|=25$ & $0.6844 (0.0605)$&	$0.8932 (0.0449)$	& $0.9292 (0.0349)$	& $0.9702 (0.0237)$\\
   \hline 
       \hline 
     \multirow{4}{*}{\rotatebox{90}{Mix}} & $|G1|=50$, $|G2|=200$ & $0.6314 (0.0445)$ &	$0.8819 (0.0208)$ &	$0.8967 (0.0168)$	& $0.9274 (0.0165)$\\
& $|G1|=10$, $|G2|=40$ & $0.6456 (0.0659)$	& $0.8866 (0.0410)$ &	$0.9086 (0.0352)$	& $0.9272 (0.0443)$\\
& $|G1|=|G2|=125$ & $0.6393 (0.0408)$ &	$0.9126 (0.0173)$	& $0.9208 (0.0167)$& 	$0.9569 (0.0188)$\\
 & $|G1|=|G2|=25$ &  $0.6328 (0.0726)$	&$0.9182 (0.0366)$&	$0.9248 (0.0399)$	&$0.9574 (0.0303)$ \\
   \hline   
\end{tabular}}
\caption{Mean and standard deviation of the proportion of well classified units in the learning set for the histogram-valued variables.}\label{table1}
\end{center}
\end{table}

\begin{table}[h!]
\begin{center}
{\scriptsize
\begin{tabular}{c|p{.18\textwidth}|l|cccc}
  \hline
  & Test set & Learning set &  Case HA  &   Case HB &   Case HC & Case HD\\
  \hline
 \multirow{4}{*}{\rotatebox{90}{Uniform}} & $|G1|=200$  & $|G1|=50$, $|G2|=200$ & $0.6277(0.0109)$&	$0.9055(0.0027)$	&$0.8941(0.0027)$	&$0.9122(0.0139)$ \\
  &  $|G2|=800$& $|G1|=10$, $|G2|=40$ & $0.6095(0.0195)$	&$0.8995(0.0099)$	&$0.8908(0.0050)$&	$0.9058(0.0177)$\\	 
  \cline{2-7}
   &  \multirow{2}{*}{$|G1|=|G2|=500$} & $|G1|=|G2|=125$ & $0.6226(0.0085)$	& $0.9335(0.0028)$	&$0.9242(0.0033)$	&$0.9361(0.0099)$	\\
    & & $|G1|=|G2|=25$ &$0.6108(0.0188)$	&$0.9267(0.0085)$&	$0.9210(0.0071)$&	$0.9296(0.0128)$\\
   \hline
     \hline
    \multirow{4}{*}{\rotatebox{90}{Normal}} &  $|G1|=200$  & $|G1|=50$, $|G2|=200$ & $0.6421(0.0189)$&	$0.9103(0.0044)$	&$0.8964(0.0049)$	&$0.9263(0.0216)$ \\
  & $|G2|=800$&  $|G1|=10$, $|G2|=40$ & $0.6254(0.0245)$	&$0.9024(0.0107)$&	$0.8926(0.0087)$	&$0.9232(0.0231)$\\	
    \cline{2-7}
   &   \multirow{2}{*}{$|G1|=|G2|=500$}  &  $|G1|=|G2|=125$ & $0.6417(0.0188)$&	$0.9343(0.0031)$&	$0.9294(0.0030)$	&$0.9482(0.0193)$	\\
    &  &  $|G1|=|G2|=25$ &$0.6235(0.0209)$&	$0.9248(0.0116)$	&$0.9253(0.0068)$	&$0.9418(0.0218)$\\
   \hline
     \hline 
     \multirow{4}{*}{\rotatebox{90}{Log-Normal}} &  $|G1|=200$   & $|G1|=50$, $|G2|=200$ & $0.5611(0.0162)$ &	$0.8016(0.0056)$ &	$0.8898(0.0048)$ &	$0.8718(0.0076)$\\
  & $|G2|=800$& $|G1|=10$, $|G2|=40$ & $0.5432(0.0283)$	 &$0.7949(0.0109)$	 &$0.8889(0.0106)$	 &$0.8600(0.0288)$\\
    \cline{2-7}	
   &  \multirow{2}{*}{$|G1|=|G2|=500$}  & $|G1|=|G2|=125$ & $0.6118(0.0175)$ &	$0.8488(0.0055)$ &	$0.9127(0.0036)$ &	$0.8948(0.0056)$	\\
    & &  $|G1|=|G2|=25$ & $0.5943(0.0248)$	 & $0.8432(0.0102)$  &	$0.9093(0.0079)$ &	$0.8858(0.0205)$\\
   \hline 
     \hline
     \multirow{4}{*}{\rotatebox{90}{Mix}} & $|G1|=200$ & $|G1|=50$, $|G2|=200$ & $0.5990(0.0332)$	&$0.8552(0.0078)$&	$0.8655(0.0059)$	&$0.8875(0.0058)$\\
& $|G2|=800$ & $|G1|=10$, $|G2|=40$ & $0.5827(0.0429)$	&$0.8543(0.0150)$&	$0.8639(0.0106)$&	$0.8737(0.0384)$\\
  \cline{2-7}
&   \multirow{2}{*}{$|G1|=|G2|=500$}  & $|G1|=|G2|=125$ & $ 0.6247(0.0262)$	 &$0.8883(0.0069)$&	$0.8898 (0.0038)$	&$0.9157 (0.0181)$	\\
 & &  $|G1|=|G2|=25$ &  $0.6022(0.0299)$&	$0.8834(0.0145)$&	$0.8873 (0.0084)$&	$0.9022 (0.0254)$	 \\
   \hline   
\end{tabular}}
\caption{Mean and standard deviation of the proportion of well classified units in the test sets for the histogram-valued variables.}\label{table2}
\end{center}
\vspace{1cm}
\end{table}

\pagebreak[4]

\begin{table}[h!]
\begin{center}
{\scriptsize
\begin{tabular}{l|cccccc}
  \hline
 & Case IA  & Case IB &  Case IC  & Case ID  &Case IE  &Case IF \\
  \hline
$|G1|=50$, $|G2|=200$ & $0.7432 (0.0298)$ 	 &	 $0.9342(0.0163)$	 &	$0.7393 (0.0298)$	& $0.9314(0.0139)$ &  $0.7526 (0.0248)$ & $0.9408 (0.0135)$	\\
$|G1|=10$, $|G2|=40$ & $0.7588 (0.0622)$	 &	$0.9384 (0.0320)$	 &	$0.7638 (0.0597)$	& $0.9330(0.0354)$	& $0.7846 (0.0497)$ &$0.9536 (0.0244)$\\
$|G1|=|G2|=125$ & $0.7533(0.0296)$	 &	$0.9545(0.0142)$	 &	$0.7496(0.0284)$	 &	$0.9534(0.0116)$	&$0.7669 (0.0255)$  &$0.9598(0.0117)$ \\
$|G1|=|G2|=25$ & $0.7678 (0.0653)$	 &	$0.9614(0.0230)$	&   $0.7744(0.0591)$	    &	    $0.9560(0.0256)$ & $0.7850 (0.0057)$ & $0.9716(0.0230)$\\
   \hline
\end{tabular}}
\caption{Mean and standard deviation of the proportion of well classified units in the learning set for the interval-valued variables.}\label{table3}
\end{center}
\end{table}

\begin{table}[h!]
\begin{center}
{\scriptsize
\begin{tabular}{p{.18\textwidth}|l|cccccc}
  \hline
   Test set & Learning set &  Case IA  &   Case IB &   Case IC & Case ID &   Case IE & Case IF\\
  \hline
  $|G1|=200$  & $|G1|=50$, $|G2|=200$ & $0.7379 (0.0061)$  &	$0.9330(0.0020)$	&$0.7379 (0.0088)$	&$0.9264(0.0032)$  & $0.7387 (0.0087)$ & $0.9343 (0.0054)$\\
$|G2|=800$& $|G1|=10$, $|G2|=40$ &$0.7193 (0.0338)$	&$0.9273 (0.0060)$	&$0.7173 (0.0334)$&	$0.9199(0.0100)$ & $0.7256 (0.0254)$ & $0.9316 (0.0054)$ \\	 
  \hline
  \multirow{2}{*}{$|G1|=|G2|=500$} & $|G1|=|G2|=125$ & $0.7557(0.0050)$	& $0.9528(0.0022)$	&$0.7679(0.0074)$	&$0.9468(0.0020)$ & $0.7533 (0.0095)$	 & 	$0.9564(0.0050)$\\
& $|G1|=|G2|=25$ &$0.7399 (0.0227)$	&$0.9483(0.0052)$&	$0.7461(0.0291)$&	$0.9438(0.0042)$ & $0.7413 (0.0253)$ & $0.9529(0.0082)$\\
\hline
\end{tabular}}
\caption{Mean and standard deviation of the proportion of well classified units in the test sets for the interval-valued variables.}\label{table4}
\end{center}
\end{table}

\end{landscape}

\end{document}